\pgfplotsset{compat=1.13}
\newtheorem{theorem}{Theorem}
\algnewcommand\algorithmicinput{\textbf{Input:}}
\algnewcommand\Input{\item[\algorithmicinput]}
\algnewcommand\algorithmicoutput{\textbf{Output:}}
\algnewcommand\Output{\item[\algorithmicoutput]}
\setlist[itemize]{noitemsep} 
\titleformat{\section}[block]{\large\scshape\centering}{\thesection.}{1em}{} 
\titleformat{\subsection}[block]{\large}{\thesubsection.}{1em}{} 
\title{Probabilistic $N$-$k$ Failure-Identification for Power Systems} 
\author{%
\textsc{Kaarthik Sundar, Carleton Coffrin, Harsha Nagarajan, Russell Bent}\thanks{Corresponding author: {rbent@lanl.gov}} \\[1ex]
\normalsize Center for Nonlinear Studies, Los Alamos National Laboratory, Los Alamos, NM 
}
\date{} 
\begin{document}
\maketitle

\section{Introduction} \label{sec:intro}
In modern society, the electric power transmission grid plays a crucial role in sustaining the socioeconomic systems that civilization depends on. While this dependence is widely observed and discussed, large-scale failures (e.g., Superstorm Sandy, Hurricane Matthew, cyber-attacks) remain a threat. These historical events highlight the need for methods that identify sets of components whose failure leads to significant impacts.
One model that is used for these identifications is the $N$-$k$ interdiction or the $N$-$k$ failure-identification problem \cite{Salmeron2009}, which focuses on identifying an $N$-$k$ contingency, \emph{i.e.,} a set of $k$ critical components of the transmission system whose simultaneous or near-simultaneous failure would maximize the disruption, measured in terms of the amount of load shedding in the system. 
This paper considers a probabilistic generalization of the $N$-$k$ problem, which we will refer to as the PNK, where the probability of failure of each component is known a priori and the objective is to find a set of probable $k$ components that causes the maximum disruption to the transmission system under a simultaneous failure. The probability of failure of each component can be generated using fragility models based on exposure to extreme events \cite{Hazus}, among other things.

It can be useful to consider the problem as a bilevel Stackelberg game (see \cite{Brown2006}) with an attacker and a defender where the attacker's and defender's actions are sequential and the attacker has a perfect model of how the defender will optimally operate the transmission system. The objective of the attacker is to identify exactly $k$ components to maximize the impact to the transmission system, measured in terms of the minimum load that must be shed by the defender weighted by the probability of the simultaneous or near-simultaneous failure of the $k$ components. Hence, $N$ and $k$ refer to the total number of components and the number of them that can be interdicted in the system, respectively. The number of possible $N$-$k$ contingencies, even for small values of $k$, makes total enumeration computationally intractable. 

\subsection{Related work} \label{subsec:lit_work}
The literature contains many variants of the $N$-$k$ problem, and algorithms that can obtain optimal solutions and heuristics have been extensively studied for these variants. For ease of exposition, we will analyze the work done in the literature using the following three categories: (1) deterministic or stochastic variant, (2) linear or convex nonlinear representation of the physics governing the flow of power in electric transmission systems, and (3) heuristic or exact approach to solving the corresponding problem. The physics governing the flow of power through electric transmission networks is nonlinear and is given by Kirchoff's laws; the equations that govern the physics of power flow are referred to as the alternating current (AC) power flow equations. Typically, problems relating to the failure of transmission lines and buses in an electric transmission system use a linear approximation of the AC power flow equations; the equations obtained after linear approximation are called the linearized direct current (DC) power flow equations. 

We will start by reviewing the literature on the deterministic $N$-$k$ interdiction problem that uses the DC power flow equations to develop exact approaches. We remark that the deterministic $N$-$k$ interdiction problem and the deterministic $N$-$k$ failure-identification problem are the the same; hence, we shall use the two names interchangeably. To the best of our knowledge, the first work in the literature to develop mixed-integer programming-based formulations for the deterministic variant of the PNK using the DC power flow model is by Salmeron et al. \cite{Salmeron2004}. The authors in \cite{Salmeron2004} develop a mixed-integer bilevel program to model the deterministic interdiction of lines and buses in the transmission system. A Benders decomposition algorithm, based on the work in \cite{Salmeron2004}, is developed to solve the same problem in \cite{Alvarez2004} on \textit{small} test instances. The authors in \cite{Alvarez2004} also perform AC analysis using PowerWorld \cite{Simulator2003} on the interdiction plan computed using the DC power flow models; their AC analysis involves computing the percentage deviation of the real power flow on the lines provided by the DC power flow model and PowerWorld. The lower-level problem in \cite{Salmeron2004} is replaced by its dual in \cite{Motto2005} and is approximated using KKT conditions in \cite{Arroyo2005}. The first work to systematically develop a decomposition algorithm to solve a deterministic bilevel $N$-$k$ problem based on Benders decomposition on \textit{large} test instances is \cite{Salmeron2009}. The algorithm developed in \cite{Salmeron2009}, an extension of the work in \cite{Salmeron2004}, is tested on a ``U.S. Regional Grid'' with 5000 buses, 5000 lines, and 500 generators. The works in \cite{Salmeron2009, Alvarez2004} measure disruption in terms of long-term power shedding, ignoring short-term shedding resulting from cascading outages immediately after the attack, and use a DC power flow model to compute this long-term power shedding. More recently, \cite{Bienstock2010} develops computationally efficient algorithms to solve a minimum cardinality variant of the $N$-$k$ problem, where the objective is to find a minimum cardinality attack with a throughput less than a prespecified bound; the DC power flow model is used to model the physics of the transmission system. They also formulate and solve a nonlinear continuous version of the problem where the attacker is allowed to change the transmission line parameters to disrupt the system instead of removing transmission lines. The authors in \cite{Delgadillo2010} study the deterministic $N$-$k$ problem where the system operator is allowed to use both load shedding and line switching as defensive operations via a Benders decomposition algorithm. In the context of this literature, our approach makes two key contributions. We analyze the theoretical and computational effects of using better representations of the AC power flow equations, viz., convex nonlinear relaxations, and we consider a probabilistic generalization.

The first work in the literature to use the full AC power flow model to analyze vulnerability in power systems is \cite{Kim2016}; the work develops two continuous optimization models where the attacker is allowed to increase the impedance of transmission lines, similar to \cite{Bienstock2010}. They develop a Frank--Wolfe decomposition algorithm to compute an optimal solution to the problem. Unlike this paper, the authors in \cite{Kim2016} do not model the removal of any component in the transmission network; modeling component removal would require the introduction of binary variables and possibly disjunctions that require different solution techniques. A different line of work in the literature (see \cite{Donde2008,Pinar2010}) focuses on vulnerability assessment in power grids using a nonlinear approximation of the AC power flow equations where the problems are formulated as bilevel mixed-integer programs. These optimization problems either maximize unmet demands or minimize the number of lines to attack. These two approaches were shown to be equivalent in \cite{Arroyo2010}. The authors in \cite{Pinar2010} also develop an algorithm based on graph partitioning to compute the best $N$-$k$ attack based solely on the generation capacities, the demands, and the transmission line limits while ignoring the power flow constraints; they develop solution methodologies that can compute the best $N$-$k$ attack on a transmission system in the western United States.  

Other heuristic approaches in the literature for the deterministic interdiction problem include \cite{Bier2007}; they present a greedy algorithm that uses a DC power flow model. Although the heuristic approaches produced favorable scaling, the solutions were not optimal. Heuristic approaches based on genetic algorithms can be found in \cite{Arroyo2013}. 

Another related problem of contingency identification has received considerable attention in the literature (see \cite{Enns1982, Eppstein2012, Davis2011, Soltan2016} and references therein). In particular, \cite{Eppstein2012} develops a heuristic approach to identify multiple contingencies that can initiate cascades on large transmission systems, and \cite{Davis2011} uses current injection-based methods and introduces ``line outage distribution factors'' to identify contingencies. This work in the literature focuses on using a variety of ``criticality'' measures, based on the DC power flow model, for the different components in the system that aid in identifying these contingencies. 

A different line of work for general flow networks is based on a generalized network performance measure. This measure is in turn used to rank the nodes and edges of the network (see \cite{Qiang2008}); the downside of using these ranking schemes on electric transmission systems is that they are not validated using power flow models. 
  
As for the literature in stochastic network interdiction problems, the authors in \cite{Cormican1998} provide a thorough introduction to many variations of stochastic network interdiction. In this context, we remark that the PNK is very different from stochastic variants of the $N$-$k$ interdiction problem that compute an $N$-$k$ contingency that maximizes the expected damage given a set of component-outage scenarios. In contrast, the PNK uses the probability of occurrence of an $N$-$k$ contingency to weight the damage incurred by that corresponding contingency; it seeks to identify an $N$-$k$ failure that maximizes the weighted damage. The PNK generalizes the deterministic $N$-$k$ interdiction problem, i.e., when the likelihood of occurrence of every $N$-$k$ contingency is the same, it exactly reduces to the deterministic problem. Stochastic network interdiction has also been studied in the context of identifying locations for installing smuggled nuclear material detectors \cite{Pan2003}. The authors in \cite{Pan2003} formulate the problem of identifying locations for installing smuggled nuclear material detectors as a two-stage stochastic mixed-integer program with recourse, and showed that the problem is NP-Hard. To the best of our knowledge, the stochastic variants of the network interdiction problem, previously addressed in operations research literature, are only for general capacitated flow networks \cite{Cormican1998, Janjarassuk2008} and not for electric transmission systems. Capacitated flow networks can be represented by linear constraints and in some cases have a total unimodularity structure that can be exploited for developing computationally efficient algorithms \cite{Cormen2009}. Furthermore, using the probability of occurrence of an $N$-$k$ failure as weights to the damage incurred by that failure is a natural generalization and extension of failure modeling in transmission systems. Transmission system operators identify a $N$-$k$ failure as a failure with exactly $k$ components failing, even in a probabilistic sense. To the best of our knowledge, this is the first work in the literature to analyze a probabilistic generalization of the network interdiction problem for electric transmission systems. 
 
\subsection{Contributions} \label{subsec:contributions}
The PNK of this paper can be considered a generalization of the work in \cite{Salmeron2009} in two ways:

\begin{enumerate}

\item We consider a probabilistic outage model rather than a deterministic model.

\item We use the AC power flow equations and their convex relaxations instead of the DC power flow approximation. 
\end{enumerate}

The first generalization makes the objective function concave, after suitable transformations, and the second generalization introduces a set of convex-quadratic constraints instead of a set of linear constraints. The main contributions of this paper are as follows. 

\begin{enumerate}

\item We develop the first formulation of the PNK as a mixed-integer, nonlinear program (MINLP) using the AC power flow equations. 

\item We develop one convex relaxation of the PNK that is based on existing convex relaxations of the AC power flow equations. We develop two approximations of the PNK with linear constraints. 

\item We develop a general cutting-plane algorithm for all formulations of the PNK problem and prove its exactness for certain cases. 

\item We perform extensive computational experimentation that demonstrates our approach is scalable to networks with thousands of nodes.

\end{enumerate}

\section{Nomenclature} \label{sec:nomenclature}
This section presents nomenclature and terminology, including those that are well understood and used routinely by the power systems community. An interested reader is referred to \cite{Coffrin2016} for a detailed description of the terminology. 

\noindent \emph{Sets:} \\
$\mathcal N$ - set of buses (nodes) in the network \\
$\mathcal N(i)$ - set of nodes connected to bus $i$ by an edge \\
$\mathcal E$ - set of \emph{from}\footnote{Although a transmission line is modeled as an undirected edge, the power flowing on the line is asymmetric; \emph{from} and \emph{to} edges are used to model this asymmetric power flowing on a line.} edges (lines) in the network \\
$\mathcal E^r$ - set of \emph{to} edges in the network \\
$\mathcal S$ - set of $N$-$k$ contingency scenarios \\
$\mathcal {\tilde E}_s$ - set of damaged edges in $N$-$k$ contingency scenario $s$ \\
$\mathcal E_s$ - set of operational \emph{from} edges in the network during $N$-$k$ contingency scenario $s$ \\
$\mathcal E_s^r$ - set of operational \emph{to} edges in the network during $N$-$k$ contingency scenario $s$ \\

\noindent \emph{Variables:} \\
$V_i^s = v_i^s\angle \theta_i^s$ - AC voltage at bus $i$ during scenario $s \in \mathcal S$ \\
$S^{gs}_i =  p^{gs}_i + \bm i q^{gs}_i$ - AC power generated at bus $i$ during scenario $s \in \mathcal S$\\
$\ell_i^s$ - percent AC power (load) shed at bus $i$ during scenario $s \in \mathcal S$ \\
$S^{s}_{ij} = p^{s}_{ij} + \bm iq^{s}_{ij}$ - AC power flow on line $(i,j)$ during scenario $s \in \mathcal S$\\
$x_{ij}$ - binary interdiction variable for line $(i,j) \in \mathcal E$\\
$z_s$ - a continuous variable representing the minimum load shed for scenario $s\in \mathcal S$ \\

\noindent \emph{Constants:} \\
$\bm S^d_i = \bm p^d_i + \bm i\bm q^d_i$ - AC power demand at bus $i$ \\
$\bm Y_{ij} = \bm g_{ij} + \bm i\bm b_{ij}$ - admittance of line $(i,j)$ \\
$(\bm v_i^l, \bm v_i^u)$ - bounds for voltage magnitude at bus $i$ \\
$(\bm p^{gl}_i, \bm p^{gu}_i)$ - bounds for active power generated at bus $i$ \\
$(\bm q^{gl}_i, \bm q^{gu}_i)$ - bounds for reactive power generated at bus $i$ \\
$\bm \theta_{ij}^{\Delta}$ - maximum phase angle difference across line  $(i,j)$ \\
$\bm t_{ij}$ - thermal limit of line $(i,j)$ \\
$\bm{Pr}_{ij}$ - i.i.d. probability of failure of line $(i,j)$ \\
$\bm{Pr}_s = \prod_{(i,j) \in \mathcal {\tilde E}_s} \bm{Pr}_{ij}$ - probability of $N$-$k$ contingency scenario $s$ \\
$\bm k$ - number of damaged/interdicted lines in each $N$-$k$ contingency scenario \\

\noindent \emph{Other notations:} \\
$\operatorname{Re}(\cdot)$ - real part of a complex number \\
$\operatorname{Im}(\cdot)$ - imaginary part of a complex number \\ 
$(\cdot)^*$ - conjugate of a complex number \\
$\lvert \cdot \rvert$ - magnitude of a complex number \\
$\angle$ - angle of a complex number 

Unless otherwise stated, all the values are in per-unit (pu). A pu system is the expression of system quantities as fractions of a defined base unit quantity; this is extensively used in power system engineering to normalize the magnitude of different system quantities and improve numerical stability. We also note that the probability of failure of a line $(i,j)$ is assumed to be i.i.d. This is a reasonable assumption in the power system context because although extreme events such as hurricanes, earthquakes and geomagnetic disturbances have a lot of structure, that structure is determined via simulation, and is not considered as an uncertainty in typical fragility assessments (see for instance \cite{Hazus}). Once the extreme event is fixed, individual component failures are conditionally independent of one another during the particular event.

\section{A bilevel MINLP formulation} \label{sec:minlp}
In this section, we formulate the PNK as a bilevel MINLP using the nomenclature introduced in the previous section. Although the approach presented in this paper holds for any type of component (generator, transformer, substation, transmission line), for ease of exposition, we assume that only the transmission lines in the system can be disrupted or interdicted; hence, throughout the rest of the article, $N$ and $\bm k$ represent the total number of transmission lines and the number of them that can be interdicted in the system, respectively. The problem is formulated as a bilevel max-min problem. From here on, we will refer to the max and min problem as the \emph{outer} and \emph{inner} problem, respectively. The formulation is as follows:
\begin{flalign}
& (\mathcal F_{\operatorname{ac}}):  \quad \max_{s \in \mathcal S} \quad \bm{Pr}_s \cdot z_s & \label{eq:obj-outer-minlp} 
\end{flalign}
where $z_s$ is defined by the following optimization problem for every scenario $s \in \mathcal S$:
\begin{flalign}
& z_s = \min \quad \sum_{i \in {\mathcal N}} \operatorname{Re}(\bm S^d_i) \ell_i^s \quad \text{ subject to:}& \label{eq:obj-inner-minlp} \\
& \bm v_i^l \leqslant \lvert V_i^s \rvert \leqslant \bm v_i^u \quad \forall i\in \mathcal N, & \label{eq:vmag-minlp} \\
& \bm p^{gl}_i \leqslant \operatorname{Re}(S^{gs}_i) \leqslant \bm p^{gu}_i \quad \forall i\in \mathcal N, & \label{eq:pg-minlp} \\
& \bm q^{gl}_i \leqslant \operatorname{Im}(S^{gs}_i) \leqslant \bm q^{gu}_i \quad \forall i\in \mathcal N, & \label{eq:qg-minlp} \\
& \lvert S_{ij}^s \rvert \leqslant \bm t_{ij} \quad \forall (i,j) \in \mathcal E_s \bigcup \mathcal E_s^r, & \label{eq:thermal-minlp} \\
& -\bm \theta_{ij}^{\Delta} \leqslant \angle V_i^sV_j^{s*} \leqslant \bm \theta_{ij}^{\Delta} \quad \forall (i,j) \in \mathcal E_s, & \label{eq:theta-minlp} \\
& 0 \leqslant \ell_i^s \leqslant 1 \quad \forall i \in \mathcal N, & \label{eq:loadshed-minlp} \\
& S_i^{gs} - (1-\ell_i^s)\bm S_i^d = \sum_{(i,j) \in \mathcal E_s \bigcup \mathcal E_s^r} S_{ij}^s \quad \forall i\in \mathcal N, \text{ and} & \label{eq:acopf1} \\
& S_{ij}^s = \bm Y_{ij}^* V_i^s V_i^{s*} - \bm Y_{ij}^* V^s_i V_j^{s*} \quad \forall (i,j) \in \mathcal E_s \bigcup \mathcal E_s^r. & \label{eq:acopf2}
\end{flalign}

 In the above formulation, the outer problem is analogous to the attacker's problem, where the attacker is interested in computing the $N$-$k$ scenario $s$ that maximizes the damage to the system (see Eq. \eqref{eq:obj-outer-minlp}); here, the attacker measures the damage in terms of the minimum load shed incurred by the defender (the system operator) during scenario $s$ weighted by the probability $\bm{Pr}_s$ of scenario $s$. This objective is shown to be an effective severity indicator for an $N$-$k$ scenario via extensive simulations in the dissertation by Nedic \cite{Nedic2003}. Computing the value of $\bm{Pr}_s$ for every $N$-$k$ scenario $s$, presents a mathematical challenge. In the following section, we will show how the independence of the line failure probabilities $\bm{Pr}_{ij}$ can be leveraged to address this mathematical challenge via the use of bilinear terms which are convexified using logarithms.

The inner problem is the defender's problem, which is to operate the power network for contingency $s$ in such a way as to minimize the active load shed in the network. Although the defender seeks to minimize the active load shed in the system, constraint \eqref{eq:acopf1} ensures that a certain quantity of reactive load is also shed in the system. This is achieved in constraint \eqref{eq:acopf1} by forcing the load-shedding factor $\ell_i^s$ at each bus $i$ during scenario $s \in \mathcal S$ to act on both the real and reactive loads. This type of load shedding is also referred to as constant power-factor shedding. The constraints for the inner problem can be classified into two categories, viz., the physical limits of the network (operational side constraints) and the AC power flow equations that govern the physics of power flowing in the network. Constraints \eqref{eq:vmag-minlp} and \eqref{eq:theta-minlp} bound the voltage magnitude at each bus and the phase angle difference between the buses connecting a line, respectively. Constraints \eqref{eq:pg-minlp} and \eqref{eq:qg-minlp} limit the real and reactive power generated at a bus during contingency $s$, respectively. Constraint \eqref{eq:loadshed-minlp} gives the restrictions on the load-shedding variable $\ell_i^s$. Eqs. \eqref{eq:acopf1} and \eqref{eq:acopf2} together define how power flows in the network and form the core building block of many power system applications; they are also the source of nonconvexity in the defender's problem. The PNK is a difficult problem (in fact, the inner problem is NP-hard \cite{Verma2009,Lehmann2016}). Hence, in the next section, we present a convex  relaxation of the inner problem and a convexification procedure to handle the nonlinear terms in the outer problem to convert the MINLP to a mixed-integer convex program. We note that the convex relaxations of the AC power flow equations in \eqref{eq:acopf1}, presented in the next section, are established and broadly accepted in the power systems and optimization literature (see \cite{Coffrin2016} and references therein).

\section{Convex reformulation of the outer problem} \label{sec:outer-convex}
As mentioned in the previous section, the outer problem has a bilinear term in the objective function. Also, the value of $\bm{Pr}_s$ that defines the probability of an $N$-$k$ scenario $s \in \mathcal S$ depends on the $\bm k$ transmission lines that are being interdicted in scenario $s$. Generally, it is computationally inefficient to maintain a look-up table that calculates the value of $\bm{Pr}_s$ for each scenario. Instead, we model the relationship between $\bm{Pr}_s$ and lines interdicted in the $N$-$k$ scenario $s$ explicitly by using binary interdiction decision variables $x_{ij}$ for each line $(i,j)$. We then use logarithms to address both these issues and convert the outer problem to a maximization of a concave function. After taking logarithms, the outer problem is as follows:
\begin{flalign}
& \max_{s \in \mathcal S} \quad \log \bm{Pr}_s + \log z_s \quad \text{ subject to:} & \notag \\
& \log \bm{Pr}_s = \sum_{(i,j) \in \mathcal E} \left(\log \bm{Pr}_{ij} \cdot x_{ij}\right), &\label{eq:nk-ps} \\
& \sum_{(i,j) \in \mathcal E} x_{ij} = \bm k, & \label{eq:nk-minlp} \\
& x_{ij} \in \{0,1\} \quad \forall (i,j) \in \mathcal E, & \label{eq:x-minlp}  \\
& z_s \geqslant 0 \quad \forall s \in \mathcal S. & \label{eq:zs} 
\end{flalign}
The decision variables in this problem are $x_{ij}$, $z_s$, and $\bm{Pr}_s$. Once the $N$-$k$ scenario $s\in \mathcal S$ is fixed, the values of all the decision variables are completely determined. In particular, the value of $z_s$ is determined by the the inner problem, given by \eqref{eq:obj-inner-minlp} -- \eqref{eq:acopf2}. Hence, the maximization is over the set of all scenarios $s \in \mathcal S$. Constraints \eqref{eq:nk-ps}, \eqref{eq:nk-minlp}, and \eqref{eq:x-minlp} define the value of $\bm{Pr}_s$ based on the values taken by the interdiction variables $x_{ij}$ corresponding to the $N$-$k$ scenario $s \in \mathcal S$. The assumption that the line-failure probabilities $\bm{Pr}_{ij}$ are independent is used to express $\bm{Pr}_s$ using $\bm{Pr}_{ij}$s in constraint \eqref{eq:nk-ps}. We now let $p =\log \bm{Pr}_s$ and the  optimization problem is equivalently written as:
\begin{flalign}
& \max \quad  p + \log z_s \quad \text{ subject to:} &  \label{eq:outer-concave}\\
& p = \sum_{(i,j) \in \mathcal E} \left(\log \bm{Pr}_{ij} \cdot x_{ij}\right), \text{ \eqref{eq:nk-minlp}, \eqref{eq:x-minlp}, and \eqref{eq:zs}. } & \notag
\end{flalign}
The function $p + \log z_s$ in the objective is jointly concave when $p, z_s > 0$. Hence, the outer problem is a convex optimization problem, provided $p, z_s > 0$.

In the next section, we present convex relaxations and linear approximations for the inner problem in the PNK. The inner problem, as presented in Section \ref{sec:minlp}, is nonlinear because of the AC power flow equations in \eqref{eq:acopf2}. This makes computing a global optimum, even for reasonably sized test instances, computationally intractable \cite{Chen2016}. As a result, there has been recent work developing convex relaxations in the literature. The relaxations include the semidefinite programming \cite{Bai2008}, the second-order cone (SOC) \cite{Jabr2006}, the quadratic convex \cite{Hijazi2017}, and the moment-based relaxations \cite{Molzahn2014}. Each of these relaxations has an associated solution quality/computation time trade-off.  In this paper we use the SOC relaxation because it has a favorable balance of computation time and solution quality in this application domain. 
We also present two approximations to the AC power flow equations: the linearized DC power flow formulation and a traditional network flow formulation. 
It is important to note that the algorithm presented in this paper is generic and can be applied to any convex relaxations of the AC power flow equations. 

\section{Convex relaxation and linear approximations of the inner problem} \label{sec:inner}
\subsection{The SOC relaxation} \label{subsec:soc}
First we present the SOC relaxation of the AC power flow equations in \eqref{eq:acopf2} to obtain a SOC program for the inner problem for a fixed $N$-$k$ scenario. We begin by rewriting the constraints for the inner problem in the following equivalent lifted form:
\begin{flalign}
& W^s_{ij} = V_i^s V_j^{s*} \quad \forall (i, j) \in \mathcal E_s, & \label{eq:acopf-w1} \\
& (\bm v_i^l)^2 \leqslant W_{ii}^s \leqslant (\bm v_i^u)^2 \quad \forall i\in \mathcal N, & \label{eq:vmag-minlp-w} \\
&\tan(-\bm \theta_{ij}^{\Delta})  \leqslant \frac{\operatorname{Im}(W_{ij}^s)}{\operatorname{Re}(W_{ij}^s)} \leqslant \tan(\bm \theta_{ij}^{\Delta}) \quad \forall (i,j) \in \mathcal E_s, & \label{eq:theta-minlp-w} \\
& S_{ij}^s = \bm Y_{ij}^* W_{ii}^{s} - \bm Y_{ij}^* W_{ij}^s \quad \forall (i,j) \in \mathcal E_s \bigcup \mathcal E_s^r, & \label{eq:acopf2-w} \\
& \text{ and constraints \eqref{eq:pg-minlp}, \eqref{eq:qg-minlp}, \eqref{eq:thermal-minlp}, \eqref{eq:loadshed-minlp} and \eqref{eq:acopf1}.} \notag  
\end{flalign} 

The introduction of these auxiliary variables ensures that the nonlinearity in the inner problem is isolated in constraint \eqref{eq:acopf-w1}. Given the above set of lifted constraints for the inner problem, the SOC relaxation is obtained by replacing constraint \eqref{eq:acopf-w1} with $\lvert W_{ij}^s \rvert^2 \leqslant W_{ii}^s W_{jj}^s$ for every $(i, j) \in \mathcal E_s$. For the sake of completeness, the full formulation for the PNK, with the convex reformulation of the outer problem and the SOC relaxation of the inner problem, is shown below:
\begin{flalign*}
& (\mathcal F_{\operatorname{soc}}):  \quad \max \quad  p + \log z_s & \\
& \text{ subject to:} \,\, p = \sum_{(i,j) \in \mathcal E} \left(\log \bm{Pr}_{ij} \cdot x_{ij}\right), \text{ \eqref{eq:nk-minlp}, and \eqref{eq:x-minlp}} & \notag
\end{flalign*}
where $z_s$ for every $N$-$k$ scenario $s \in \mathcal S$ is defined by
\begin{flalign}
& z_s = \min \quad \sum_{i \in {\mathcal N}} \operatorname{Re}(\bm S^d_i) \ell_i^s & \notag \\
& \text{ subject to:} & \notag \\
& \lvert W_{ij}^s \rvert^2 \leqslant W_{ii}^s W_{jj}^s \quad \forall (i, j) \in \mathcal E_s, & \label{eq:acopf-w1a} \\
& \text{\eqref{eq:vmag-minlp-w}, \eqref{eq:theta-minlp-w}, \eqref{eq:acopf2-w}, \eqref{eq:pg-minlp}, \eqref{eq:qg-minlp}, \eqref{eq:thermal-minlp}, \eqref{eq:loadshed-minlp}, and \eqref{eq:acopf1}.} \notag  
\end{flalign} 

\subsection{The DC approximation} \label{subsec:dc}
In this section, we present the linearized DC approximation of the AC power flow equations in \eqref{eq:acopf2}; we note that the DC approximation results in the inner problem being a linear program for a fixed $N$-$k$ scenario $s \in \mathcal S$. To make the presentation of the DC approximation simple, it is useful to represent the AC power flow equations in their polar form. Eq. \eqref{eq:acopf2}, in polar coordinates, is given by the following set of two equations for every $(i,j) \in \bar{\mathcal E}_s$:
\begin{flalign*}
&p_{ij}^s = (v_i^s)^2\bm g_{ij} - v_i^s v_j^s \left(\bm g_{ij} \cos(\theta_i^s - \theta_j^s) + \bm b_{ij} \sin(\theta_i^s - \theta_j^s)\right), & \\
&q_{ij}^s = -(v_i^s)^2\bm b_{ij} - v_i^s v_j^s \left(\bm g_{ij} \sin(\theta_i^s - \theta_j^s) - \bm b_{ij} \cos(\theta_i^s - \theta_j^s)\right). & 
\end{flalign*}
Three basic assumptions are used to derive the DC approximation of the AC power flow constraints in Cartesian form. They are as follows:
\begin{enumerate}[label=(A\arabic*)]
    \item $|\bm b_{ij}| \gg |\bm g_{ij}|$ for every line $(i,j)$.
    \item The voltage magnitude at each node or bus is $1$ pu, \textit{i.e.}, $v_i^s = 1$ pu for each bus $i$.
    \item The voltage angle difference $(\theta^s_i - \theta^s_j)$ across any line $(i, j) \in \mathcal E$ is small enough that $\cos (\theta_i^s - \theta_j^s) \approx 1 $ and $\sin (\theta_i^s - \theta_j^s) \approx  (\theta_i^s - \theta_j^s).$
\end{enumerate}
Using these three assumptions on the AC power flow equations yields
\begin{flalign}
& p_{ij}^s = -\bm b_{ij} (\theta^s_i - \theta^s_j) \text{ and } q_{ij}^s = 0 \quad \forall (i,j) \in \mathcal E_s \bigcup \mathcal E_s^r. & \label{eq:dcopf}
\end{flalign}
Applying Eq. \eqref{eq:dcopf} in the inner problem for every scenario $s\in \mathcal S$ and the convex reformulation of the outer problem yields the following DC approximation of the PNK:
\begin{flalign*}
& (\mathcal F_{\operatorname{dc}}):  \quad \max \quad  p + \log z_s & \\
& \text{ subject to:} \,\, p = \sum_{(i,j) \in \mathcal E} \left(\log \bm{Pr}_{ij} \cdot x_{ij}\right), \text{ \eqref{eq:nk-minlp}, and \eqref{eq:x-minlp}} & \notag
\end{flalign*}
where
\begin{flalign}
& z_s = \min \quad \sum_{i \in {\mathcal N}} \bm p^d_i \ell_i^s \quad \text{ subject to:}& \label{eq:obj-inner-dc} \\
& \bm p^{gl}_i \leqslant p^{gs}_i \leqslant \bm p^{gu}_i \quad \forall i\in \mathcal N, & \label{eq:pg-dc} \\
& \lvert p_{ij}^s \rvert \leqslant \bm t_{ij} \quad \forall (i,j) \in \mathcal E_s, & \label{eq:thermal-dc} \\
& 0 \leqslant \ell_i^s \leqslant 1 \quad \forall i \in \mathcal N, & \label{eq:loadshed-dc} \\
& p_i^{gs} - (1-\ell_i^s)\bm p_i^d = \sum_{(i,j) \in \mathcal E_s \bigcup \mathcal E_s^r} p_{ij}^s \quad \forall i\in \mathcal N, \text{ and} & \label{eq:dc1} \\
& p_{ij}^s = -\bm b_{ij} (\theta^s_i - \theta^s_j) \quad \forall (i,j) \in \mathcal E_s \bigcup \mathcal E_s^r. & \label{eq:dc2}
\end{flalign}
In the next section, we present another linear approximation of the inner problem of the PNK: a network flow (NF) approximation. We note that the NF approximation is actually a relaxation of the DC approximation presented in this section. 

\subsection{NF approximation} \label{subsec:nf}
In this section, we present a relaxation of the DC approximation using capacitated flows in networks. The NF approximation for the PNK is obtained by dropping constraint \eqref{eq:dc2} from the DC approximation presented in the previous section. It is trivial to observe that for a given $N$-$k$ scenario, the inner problem without \eqref{eq:dc2} is a maximum flow problem. For the sake of completeness, the PNK with the NF approximation is shown below:
\begin{flalign*}
& (\mathcal F_{\operatorname{nf}}):  \quad \max \quad  p + \log z_s & \\
& \text{ subject to:} \,\, p = \sum_{(i,j) \in \mathcal E} \left(\log \bm{Pr}_{ij} \cdot x_{ij}\right), \text{ \eqref{eq:nk-minlp}, and \eqref{eq:x-minlp}} & \notag
\end{flalign*}
where
\begin{flalign*}
& z_s = \min \quad \sum_{i \in {\mathcal N}} \bm p^d_i \ell_i^s \text{ subject to: \eqref{eq:pg-dc}, \eqref{eq:thermal-dc}, \eqref{eq:loadshed-dc},  and \eqref{eq:dc1}}. &
\end{flalign*}

\section{A cutting-plane algorithm} \label{sec:cp}
In this section, we present a generic cutting-plane algorithm that is common to the MINLP, the SOC relaxation, and the DC and NF approximation for the PNK. The cutting-plane algorithm uses the Stackelberg game structure that is inherent to all the formulations. The main difficulty in developing any algorithm for the PNK arises from the nonconvex max-min nature of the problem. A number of techniques have been studied and analyzed in the literature to convert bilevel max-min problems like the PNK to a single mixed-integer program (see  \cite{Alvarez2004, Motto2005}). These techniques are known to have scaling issues for large-scale test instances. To understand why scaling is a problem, it is useful to think about an $N$-$k$ contingency scenario $s$ as modifying the original set of lines $\mathcal E$ to be $\mathcal E_s$. Representing this behavior as constraints is key to converting the bilevel program to a single-level mixed-integer program and can be achieved by using disjunctions. Furthermore, if the $N$-$k$ scenario includes node or substation failures, this further complicates the model with additional multilinear terms \cite{Salmeron2009}. Instead, we present a cutting-plane algorithm that, instead of converting the bilevel PNK to a single-level problem, generates cutting planes (using the solution of the inner problem) that are added sequentially to the outer problem. 

The algorithm relies on constructing a sequence of piecewise linear functions that upper bound the total active load shed $z_s$, given by the solution to the inner problem or its relaxation and approximations, for any $s \in \mathcal S$. For any $N$-$k$ scenario $\hat s \in \mathcal S$ and the associated solution $x_{ij}$ for $(i,j) \in \mathcal E$, let $z_{\hat s}$ denote the minimum load shed that is obtained by solving the inner problem given by \eqref{eq:obj-inner-minlp} -- \eqref{eq:acopf2} or one of its convex relaxation or approximations in Section \ref{sec:inner}. Then, the algorithm computes coefficients $\alpha_{ij}(\hat s)$ for each line $(i,j) \in \mathcal E$ such that 
\begin{flalign}
&z_s \leqslant z_{\hat s} + \sum_{(i,j) \in \mathcal E} \alpha_{ij}(\hat s) \cdot x_{ij} \quad \forall s \in \mathcal S. & \label{eq:woodscut}
\end{flalign}
The cut in \eqref{eq:woodscut} has been developed and found to be effective for the deterministic variant of formulation $\mathcal F_{\operatorname{dc}}$ (see \cite{Salmeron2009}). In this section, we show that this cut generalizes to the PNK and to different relaxations and approximations of the AC power flow equations. The inequality in \eqref{eq:woodscut} provides an upper bound for $z_s$, the objective function of the inner problem, for every feasible $N$-$k$ scenario $s \in \mathcal S$. The linear cut in \eqref{eq:woodscut} is very general, and there are many choices for the cut coefficients $\alpha_{ij}(\hat s)$. The key challenge is to choose \textit{tight} values for each coefficient that do not remove any feasible $N$-$k$ scenario. For the PNK, these coefficients $\alpha_{ij}(\hat s)$ are computed using a combination of the subproblem solution for scenario $s$ and the physics of power flow in the network. Using the inequality in \eqref{eq:woodscut}, the PNK is equivalently written as:
\begin{flalign*}
& (\mathcal F_{\operatorname{cp}}) \quad \max \quad p + \log z_s \quad \text{ subject to:} \quad & \\
&z_s \leqslant z_{\hat s} + \sum_{(i,j) \in \mathcal E} \alpha_{ij}(\hat s) \cdot x_{ij} \quad \forall \hat s \in \mathcal S, & \\
& p = \sum_{(i,j) \in \mathcal E} \left(\log \bm{Pr}_{ij} \cdot x_{ij}\right), \text{ \eqref{eq:nk-minlp}, and \eqref{eq:x-minlp}.} & \notag
\end{flalign*}
Pseudocode of the cutting-plane algorithm for the PNK using formulation $\mathcal F_{\operatorname{cp}}$ is given in Algorithm \ref{algo:pseudocode}. The pseudocode assumes a black box for solving the inner problem and computing the cut coefficients in \eqref{eq:woodscut}.

\begin{algorithm}
\caption{Cutting-plane algorithm: pseudocode}\label{algo:pseudocode}
\begin{algorithmic}[1]
\vspace{1ex}
\Input optimality tolerance, $\varepsilon > 0$
\Output $s^* \in \mathcal S$, an $\varepsilon$-optimal $N$-$k$ scenario to the PNK
\State initial problem: $\mathcal F_{\operatorname{cp}}$ without constraint \eqref{eq:woodscut}
\State $f^* \gets -\infty$ \Comment{lower bound on the optimal obj. value}
\State $\bar f\gets +\infty$ \Comment{upper bound on the optimal obj. value} 
\State $\hat s \gets $ any initial $N$-$k$ scenario
\State $p(\hat s) \gets \sum_{(i,j) \in \tilde{\mathcal E}_{\hat s}} \log \bm{Pr}_{ij}$
\State \label{step:sub}solve subproblem for $\hat s$ and let $z_{\hat s}$ be the objective value
\State $f(\hat s) \gets p(\hat s) + \log z_{\hat s}$
\If{$f(\hat s) > f^*$} $f^* \gets f(\hat s)$ and $s^* \gets \hat s$ 
\EndIf
\State compute $\alpha_{ij}(\hat s)$ for every $(i,j) \in \mathcal E$ satisfying \eqref{eq:woodscut} 
\State \label{step:master}add $z_s \leqslant z_{\hat s} + \sum_{(i,j) \in \mathcal E} \alpha_{ij}(\hat s) \cdot x_{ij}$ to $\mathcal F_{\operatorname{cp}}$ and resolve
\State update $\hat s$, and set $\bar f$ using solution from Step \ref{step:master}
\If{$\bar f - f^* \leqslant \varepsilon f^*$} $(s^*, f^*)$ is the $\varepsilon$-optimal solution to the PNK, stop  
\EndIf
\State return to step: \ref{step:sub}
\end{algorithmic}
\end{algorithm}

\subsection{Computing cut coefficients}
Here we present an algorithm to compute the cut coefficients $\alpha_{ij}(\hat s)$ for a given $N$-$k$ scenario $\hat s$ and for every $(i,j) \in \mathcal E$ using the solution to the inner problem. First we present an expression for the cut coefficients, and then we present a proof of validity of the cut when the NF approximation is used for the inner problem followed by the necessary conditions under which the cut is still valid for the DC approximation and the SOC relaxation of the inner problem.  Regardless of the formulation that is used for the inner problem ($\mathcal F_{\operatorname{soc}}, \mathcal F_{\operatorname{dc}}$, or $\mathcal F_{\operatorname{nf}}$), the solution to any of these problems for a fixed $N$-$k$ scenario $\hat s$ provides a value of $p_{ij}^{\hat s}$ for each line $(i,j) \in \mathcal E_{\hat s} \bigcup \mathcal E^r_{\hat s}$. Each $p_{ij}^{\hat s}$ represents the active power flowing on line $(i,j) \in \mathcal E_{\hat s} \bigcup \mathcal E^r_{\hat s}$ during the $N$-$k$ scenario $\hat s$. Given these values of $p_{ij}^{\hat s}$, the cut coefficients are defined by the following equation:
\begin{flalign}
\alpha_{ij}^{\hat s} = \begin{cases} \max\{|p_{ij}^{\hat s}|, |p_{ji}^{\hat s}|\} & \mbox{if } (i,j) \in \mathcal E \setminus \tilde{\mathcal E}_{\hat s}, \\ 
0 & \text{ otherwise.} \end{cases} \label{eq:coefficients}
\end{flalign}
Recall that $\tilde{\mathcal E}_{\hat s}$ is the set of lines that is removed for the $N$-$k$ scenario $\hat s$. When the coefficient $\alpha_{ij}^{\hat s}$ provides an upper bound on the total amount of active load that must be shed as the line $(i, j)$ is removed from the network during an $N$-$k$ scenario $\hat s$, the cut given by \eqref{eq:woodscut} is valid. The following theorem proves that the cut in \eqref{eq:woodscut} with the cut coefficients in \eqref{eq:coefficients} is valid for formulation $\mathcal F_{\operatorname{nf}}$, which uses the NF approximation of the inner problem. 

\begin{theorem} \label{thm:nf}
    For any $N$-$k$ scenario $\hat s \in \mathcal S$, the cut $z_s \leqslant z_{\hat s} + \sum_{(i,j) \in \mathcal E} \alpha_{ij}(\hat s) \cdot x_{ij}$, with $\alpha_{ij}(\hat s)$ taking the values in Eq. \eqref{eq:coefficients}, is valid for formulation $\mathcal F_{\operatorname{nf}}$.
\end{theorem}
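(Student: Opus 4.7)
The plan is to verify the cut constructively: for any scenario $s \in \mathcal S$, I would exhibit a feasible NF solution whose objective is at most $z_{\hat s} + \sum_{(i,j) \in \mathcal E} \alpha_{ij}(\hat s) \cdot x_{ij}^s$, which will bound the minimum $z_s$. Because $\alpha_{ij}(\hat s) = 0$ on lines in $\tilde{\mathcal E}_{\hat s}$, the right-hand side reduces to $z_{\hat s} + \sum_{(i,j) \in D}\max\{|p_{ij}^{\hat s}|,|p_{ji}^{\hat s}|\}$, where $D := \tilde{\mathcal E}_s \setminus \tilde{\mathcal E}_{\hat s}$ denotes the lines operational under $\hat s$ but removed under $s$. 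Exhibiting any feasible $s$-solution whose load-shed objective hits that bound then proves the cut.

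I would build the candidate solution for $s$ from the $\hat s$-optimum $(p^{g,\hat s}, \ell^{\hat s}, p^{\hat s})$ by (i) keeping generation, shedding, and flows unchanged on edges operational in both scenarios; (ii) zeroing the flows $p_{ij}^s, p_{ji}^s$ on every edge of $D$; and (iii) leaving flows at zero on the (unused but now available) edges of $\tilde{\mathcal E}_{\hat s} \setminus \tilde{\mathcal E}_s$. Thermal, generation, and shedding bounds are preserved by this modification; only the nodal flow-balance \eqref{eq:dc1} is violated, and only at endpoints of edges in $D$. I would then restore balance at each disturbed node by modifying $\ell_i$ (and $p_i^{gs}$ when generator slack allows) by exactly the amount needed to absorb the local imbalance introduced in step (ii).

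The crux, and the step I expect to be the main obstacle, is to bound the resulting change $\sum_i p_i^d(\ell_i^s - \ell_i^{\hat s})$ by the per-line \emph{maximum} $\max\{|p_{ij}^{\hat s}|,|p_{ji}^{\hat s}|\}$ rather than by the looser per-endpoint sum $|p_{ij}^{\hat s}| + |p_{ji}^{\hat s}|$. A natural per-line accounting is as follows: after zeroing on edge $(i,j) \in D$, the endpoints $i$ and $j$ inherit imbalances of $p_{ij}^{\hat s}$ and $p_{ji}^{\hat s}$, respectively; at whichever endpoint carries the smaller-magnitude flow, the imbalance can be absorbed at zero cost to the objective, either by consuming remaining generator slack or, when the sign of the imbalance calls for it, by \emph{decreasing} $\ell$ at that node. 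Only the larger-magnitude endpoint contributes a positive increment to the objective, capped at $\max\{|p_{ij}^{\hat s}|,|p_{ji}^{\hat s}|\}$. Making this rigorous---in particular, verifying that the generator-slack and $\ell$-decrease moves remain feasible under the problem's bounds---is the delicate part, and likely requires leveraging the global feasibility of the NF problem to guarantee that enough absorption capacity exists.

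Assembling the per-line bounds yields $z_s \leqslant \sum_i p_i^d \ell_i^s \leqslant z_{\hat s} + \sum_{(i,j) \in D}\max\{|p_{ij}^{\hat s}|,|p_{ji}^{\hat s}|\} = z_{\hat s} + \sum_{(i,j) \in \mathcal E} \alpha_{ij}(\hat s) \cdot x_{ij}^s$, which is exactly \eqref{eq:woodscut} with $\alpha_{ij}(\hat s)$ given by \eqref{eq:coefficients}, thereby establishing validity of the cut for $\mathcal F_{\operatorname{nf}}$.
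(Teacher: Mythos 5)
Your overall strategy---construct a feasible NF solution for an arbitrary scenario $s$ from the $\hat s$-optimum and bound its objective---is a legitimate route to validity, and your reduction of the right-hand side to $z_{\hat s} + \sum_{(i,j)\in D}\max\{|p_{ij}^{\hat s}|,|p_{ji}^{\hat s}|\}$ with $D = \tilde{\mathcal E}_s\setminus\tilde{\mathcal E}_{\hat s}$ is the right target. The gap is in the mechanism you propose for the step you yourself flag as delicate: you try to absorb the imbalance created by zeroing a deleted line $(i,j)$ \emph{locally at its two endpoints}, via generator slack or a change in $\ell$. This fails whenever an endpoint is a transshipment bus. If bus $i$ has no generator and no load (or its generator sits at its lower bound and $\ell_i=0$), nothing at $i$ can absorb the surplus $p_{ij}^{\hat s}$: that power entered $i$ over other lines and must be pushed back along them to the generators that produced it, and symmetrically the deficit at $j$ must be propagated forward to the loads that were being served through $(i,j)$. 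Local endpoint accounting cannot see this, and ``leveraging the global feasibility of the NF problem'' does not repair it---feasibility of the $\hat s$-solution says nothing about absorption capacity at the two particular buses incident to a deleted line.

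The missing idea is a flow decomposition, which is how the paper argues. The inner problem of $\mathcal F_{\operatorname{nf}}$ (minimizing shed equals maximizing served load subject to balance and capacity constraints) is a maximum-flow problem from a super-source through the generators to the loads. Decompose the optimal $\hat s$-flow into source--sink paths and cycles, and delete every path that traverses a line in $D$. The total value of the deleted paths through a line $(i,j)$ is at most the flow that line carries, $\max\{|p_{ij}^{\hat s}|,|p_{ji}^{\hat s}|\}$ (with the implicit antisymmetry $p_{ji}^{\hat s}=-p_{ij}^{\hat s}$ this is just $|p_{ij}^{\hat s}|$); the surviving flow is feasible for scenario $s$; hence the served load drops, and the shed load rises, by at most $\sum_{(i,j)\in D}\max\{|p_{ij}^{\hat s}|,|p_{ji}^{\hat s}|\}$. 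This is the content of the paper's appeal to the max-flow/min-cut machinery: the adjustment is global along the paths through the deleted lines, not local at their endpoints. If you replace your per-endpoint absorption with this path-cancellation argument, the rest of your bookkeeping (zero coefficients on $\tilde{\mathcal E}_{\hat s}$, the final chain of inequalities) goes through unchanged.
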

\begin{proof}
As discussed in Section \ref{subsec:nf}, the inner problem in formulation $\mathcal F_{\operatorname{nf}}$ is a maximum flow problem because minimizing the active load shed in the system is equivalent to maximizing the active load satisfied in the system. Maximizing the amount of active load satisfied in the system subject to the NF constraints in $\mathcal F_{\operatorname{nf}}$ is simply the maximum flow problem in a directed graph; $p_{ij}^{\hat s}$ represents the amount of active power flowing through line $(i, j) \in \mathcal E_{\hat s} \bigcup \mathcal E_{\hat s}^r$. At the optimal solution of the inner problem of $\mathcal F_{\operatorname{nf}}$ for a fixed $s \in \mathcal S$, the maximum amount of load that can be shed when line $(i,j)$ is removed from the network is given exactly by the expression in \eqref{eq:coefficients}. This follows from the max-flow min-cut theorem \cite{Cormen2009}.
\end{proof}

We note that a trivial value of $\alpha_{ij}(\hat s) = \sum_{i \in \mathcal N} \bm p_i^d$ for every $(i,j) \in \mathcal E \setminus \tilde{\mathcal E}_{\hat s}$ (total active power demand in the system) results in a valid cut for the linearized DC approximation and the SOC relaxation in Sections \ref{subsec:dc} and \ref{subsec:soc}, respectively. But, this value for cut coefficients is very weak and leads to poor convergence of the cutting-plane algorithm. This motivates the use of heuristic choices for  cut-coefficients that have better convergence properties. As a result, the value of the cut coefficients, as given by Eq. \eqref{eq:coefficients}, is valid for the formulation $\mathcal F_{\operatorname{nf}}$. For the DC approximation of the PNK presented in Section \ref{subsec:dc}, the cut given in \eqref{eq:woodscut} using the coefficients in \eqref{eq:coefficients} is not necessarily valid and can remove feasible solutions to the PNK. An interested reader is referred to \cite{Baldick2006,Salmeron2009} for a counterexample. Nevertheless, the following theorem mathematically formalizes an empirical statement in \cite{Salmeron2009} concerning the validity of the cut in \eqref{eq:woodscut} using the coefficients in \eqref{eq:coefficients}. This theorem specifies a necessary condition under which the cut given by \eqref{eq:woodscut} is valid for the formulations $\mathcal F_{\operatorname{dc}}$ and $\mathcal F_{\operatorname{soc}}$. 

\begin{theorem} \label{thm:dc_soc}
    For any $N$-$k$ scenario $\hat s \in \mathcal S$, the cut $z_s \leqslant z_{\hat s} + \sum_{(i,j) \in \mathcal E} \alpha_{ij}(\hat s) \cdot x_{ij}$, with $\alpha_{ij}(\hat s)$ taking the values in Eq. \eqref{eq:coefficients}, is valid for the formulations $\mathcal F_{\operatorname{dc}}$ and $\mathcal F_{\operatorname{soc}}$ if the following condition is satisfied:
    \begin{enumerate}[label=(B\arabic*)]
        \item The removal of the set of lines in $(i,j) \in \mathcal E_{\hat s}$, each carrying $p_{ij}^{\hat s}$ and $p_{ji}^{\hat s}$ units of real power during contingency $\hat s$ from $i$ to $j$ and $j$ to $i$, respectively, leads to an additional total real power load shedding of at most $\sum_{(i,j) \in \mathcal E_{\hat s}}  \max\{|p_{ij}^{\hat s}|, |p_{ji}^{\hat s}|\}$.
    \end{enumerate}
\end{theorem}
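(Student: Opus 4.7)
The plan is to (i) rewrite the right-hand side of the proposed cut in a form that matches the conclusion of condition (B1), and then (ii) apply (B1) directly through an intermediate scenario that bridges $\hat s$ and $s$. First I would fix any $s\in\mathcal S$ and set $T=\tilde{\mathcal E}_s\setminus\tilde{\mathcal E}_{\hat s}$, the set of lines damaged in $s$ but still operational in $\hat s$, so $T\subseteq\mathcal E_{\hat s}$. Because $\alpha_{ij}(\hat s)=0$ for every $(i,j)\in\tilde{\mathcal E}_{\hat s}$ by \eqref{eq:coefficients}, while $x_{ij}=1$ exactly when $(i,j)\in\tilde{\mathcal E}_s$, the cut sum collapses: $\sum_{(i,j)\in\mathcal E}\alpha_{ij}(\hat s)\,x_{ij}=\sum_{(i,j)\in T}\max\{|p_{ij}^{\hat s}|,|p_{ji}^{\hat s}|\}$. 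This is precisely the bound that (B1) places on the additional load shedding when the particular subset $T\subseteq\mathcal E_{\hat s}$ is removed from scenario $\hat s$, so the whole problem is to pass from an (B1)-style bound on a specific comparison scenario to a bound on the actual $z_s$.

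Next I would introduce the auxiliary scenario $s^{\dagger}$ with damaged set $\tilde{\mathcal E}_{s^{\dagger}}=\tilde{\mathcal E}_{\hat s}\cup T=\tilde{\mathcal E}_{\hat s}\cup\tilde{\mathcal E}_s$. By construction $s^{\dagger}$ is obtained from $\hat s$ by additionally dropping the lines in $T$, so condition (B1) applied to $T$ delivers
\begin{flalign*}
z_{s^{\dagger}} \leqslant z_{\hat s} + \sum_{(i,j)\in T}\max\{|p_{ij}^{\hat s}|,|p_{ji}^{\hat s}|\}.
\end{flalign*}
Note $s^{\dagger}\notin\mathcal S$ in general (since $|\tilde{\mathcal E}_{s^{\dagger}}|>\bm k$ whenever $T\neq\emptyset$), but its inner problem in $\mathcal F_{\operatorname{dc}}$ and $\mathcal F_{\operatorname{soc}}$ is still well defined. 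The remaining ingredient is the monotonicity $z_s\leqslant z_{s^{\dagger}}$, which should follow from $\tilde{\mathcal E}_{s^{\dagger}}\supseteq\tilde{\mathcal E}_s$: starting from an optimal inner-problem solution for $s^{\dagger}$ I would extend it to a feasible solution for $s$ of no larger objective value by assigning lifted variables $W_{ij}^s$ (for $\mathcal F_{\operatorname{soc}}$) or angle-induced flows $p_{ij}^s=-\bm b_{ij}(\theta_i^s-\theta_j^s)$ (for $\mathcal F_{\operatorname{dc}}$) on each line restored from $\tilde{\mathcal E}_{s^{\dagger}}\setminus\tilde{\mathcal E}_s$, and checking that thermal and nodal-balance constraints continue to hold. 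Chaining this with the (B1) inequality then yields $z_s\leqslant z_{\hat s}+\sum_{(i,j)\in T}\max\{|p_{ij}^{\hat s}|,|p_{ji}^{\hat s}|\}$, which is the cut.

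The main obstacle is the monotonicity step under $\mathcal F_{\operatorname{dc}}$: the rigid linear law $p_{ij}=-\bm b_{ij}(\theta_i-\theta_j)$ means that restoring an operational line can itself force a flow that violates its thermal limit \eqref{eq:thermal-dc}, breaking the direct feasibility extension and producing the counterexample referenced in \cite{Baldick2006,Salmeron2009}. The substance of Theorem \ref{thm:dc_soc} is precisely that (B1) is the sufficient assumption that rules out this failure mode: it upper-bounds the incremental shedding required to absorb the real power rerouted after the $T$-lines are dropped, and therefore, under (B1), the $s$-feasibility extension can always be carried out while keeping the load shedding within the cut bound. For $\mathcal F_{\operatorname{soc}}$ the extension is easier to verify because the free $W_{ij}$ variables absorb the angle-flow coupling within the conic relaxation, and (B1) is again what closes the argument.
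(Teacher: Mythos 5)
Your reduction of the cut's right-hand side to $\sum_{(i,j)\in T}\max\{|p_{ij}^{\hat s}|,|p_{ji}^{\hat s}|\}$ with $T=\tilde{\mathcal E}_s\setminus\tilde{\mathcal E}_{\hat s}$ is correct, and your overall route is a more explicit version of what the paper's own two-line proof does: the paper simply reads condition (B1) as saying that removing lines from the operating point of $\hat s$ increases the minimum load shed by at most the sum of the corresponding coefficients, and declares the cut proved. Where you go further is in making the comparison scenario explicit: you pass from $\hat s$ to the auxiliary scenario $s^{\dagger}$ with damaged set $\tilde{\mathcal E}_{\hat s}\cup\tilde{\mathcal E}_s$ via (B1), and then need the monotonicity $z_s\leqslant z_{s^{\dagger}}$ to land on the actual scenario $s$, which \emph{restores} the lines in $\tilde{\mathcal E}_{\hat s}\setminus\tilde{\mathcal E}_s$. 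You correctly flag this restoration step as the crux for $\mathcal F_{\operatorname{dc}}$; note it is equally nontrivial for $\mathcal F_{\operatorname{soc}}$, where a restored line $(i,j)$ cannot simply carry zero flow because $S_{ij}^s=\bm Y_{ij}^*(W_{ii}^s-W_{ij}^s)$ ties its flow to the already-fixed diagonal entries through the cone constraint, so the ``free $W_{ij}$ absorbs the coupling'' remark is too optimistic.

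The genuine gap is that you then assert, without argument, that (B1) ``is precisely the sufficient assumption that rules out this failure mode.'' It is not: (B1) as stated bounds only the additional shedding caused by \emph{removing} lines from scenario $\hat s$; it says nothing about the effect of \emph{restoring} the lines in $\tilde{\mathcal E}_{\hat s}\setminus\tilde{\mathcal E}_s$, and Braess-type non-monotonicity means that restoring a line can in principle increase the minimum load shed under the rigid law $p_{ij}^s=-\bm b_{ij}(\theta_i^s-\theta_j^s)$. Your chain $z_s\leqslant z_{s^{\dagger}}\leqslant z_{\hat s}+\sum_{(i,j)\in T}\alpha_{ij}(\hat s)$ is therefore missing a justification for its first link. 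To be fair, the paper's proof never even isolates this link --- it treats (B1) as directly asserting the conclusion --- and the paper concedes immediately after the theorem that a true sufficiency condition for $\mathcal F_{\operatorname{dc}}$ and $\mathcal F_{\operatorname{soc}}$ remains open, so the algorithm is only a heuristic there. To make your version airtight you would need either to strengthen (B1) so that it covers the full symmetric difference (removal of $T$ together with restoration of $\tilde{\mathcal E}_{\hat s}\setminus\tilde{\mathcal E}_s$), or to add monotonicity of $z$ under line restoration as an explicit additional hypothesis.
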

\begin{proof}
The cut in \eqref{eq:woodscut} can be interpreted as follows: given a $N$-$k$ scenario $\hat s \in \mathcal S$ with an inner problem objective value $z_{\hat s}$, removal of a set of lines $(i,j) \in \mathcal E_{\hat s}$ would result in an increase in load shed by a value of at most $\sum_{(i,j) \in \mathcal E_{\hat s}} \alpha_{ij}$. Eq. \eqref{eq:coefficients} and the assumption (B1) together indicate that $$\sum_{(i,j) \in \mathcal E_{\hat s}} \alpha_{ij} = \sum_{(i,j) \in \mathcal E_{\hat s}}  \max\{|p_{ij}^{\hat s}|, |p_{ji}^{\hat s}|\},$$ completing the proof.  
\end{proof}

In general, assumption (B1) need not hold (see \cite{Baldick2006}) because of Braess's paradox as exhibited in electric transmission systems. The validity of assumption (B1) has been empirically observed for many available power system test cases via extensive computational experiments (see \cite{Salmeron2009}) for the deterministic variant of the DC approximation. As for the SOC relaxation of the PNK given by formulation $\mathcal F_{\operatorname{soc}}$, we make assumption (B1); in fact, the sufficiency condition for the validity of the cut given in \eqref{eq:woodscut} and \eqref{eq:coefficients} for models $\mathcal F_{\operatorname{soc}}$ and $\mathcal F_{\operatorname{dc}}$ is still an open question. This makes the cutting-plane algorithm a heuristic for formulations $\mathcal F_{\operatorname{soc}}$ and $\mathcal F_{\operatorname{dc}}$. 

\subsection{Logical constraints} \label{subsec:logic}
We now present an additional logical constraint that is added to speed up the convergence of the cutting-plane algorithm. 
Given a feasible $N$-$k$ scenario $\hat s\in \mathcal S$, the constraint is given by 
\begin{flalign}
&  \sum_{(i,j) \in \tilde{\mathcal E}_{\hat s}} x_{ij} \leqslant \bm k - 1 \label{eq:logic}. 
\end{flalign}
This constraint forces any new $N$-$k$ scenario to differ from $\hat s$ in at least one component. This type of cut has been observed to be effective in similar interdiction problems \cite{Israeli2002,Salmeron2009}. In the following theorem, we prove that the cutting-plane algorithm, along with the logical constraints in \eqref{eq:logic}, converges in a finite number of iterations.

\begin{theorem} \label{thm:convergence}
The cutting-plane algorithm presented in Algorithm \ref{algo:pseudocode}, along with the logical constraints in \eqref{eq:logic}, converges in finite number of iterations.
\end{theorem}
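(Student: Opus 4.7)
The plan is to exploit the fact that the set of $N$-$k$ contingency scenarios $\mathcal S$ is finite, with $|\mathcal S| = \binom{|\mathcal E|}{\bm k}$, and argue that the logical constraints in \eqref{eq:logic} force the algorithm to produce a strictly increasing sequence of ``explored'' scenarios, so termination must occur within $|\mathcal S|$ iterations.

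First I would show that the logical constraint added for a scenario $\hat s$ strictly excludes $\hat s$ from all subsequent master problems. Concretely, any feasible $x$ in the master problem must satisfy $\sum_{(i,j)\in\mathcal E} x_{ij} = \bm k$ by \eqref{eq:nk-minlp}, so the cut $\sum_{(i,j)\in\tilde{\mathcal E}_{\hat s}} x_{ij} \leqslant \bm k - 1$ together with $|\tilde{\mathcal E}_{\hat s}| = \bm k$ rules out precisely the incidence vector associated with $\hat s$. Consequently, the scenarios $\hat s_1, \hat s_2, \ldots$ generated at Step \ref{step:master} of successive iterations are pairwise distinct.

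Next I would combine this with the finiteness of $\mathcal S$. After at most $|\mathcal S|$ iterations, every scenario has either been visited (and then excluded by its logical cut) or remains infeasible under the accumulated logical constraints. Thus the master problem in Step \ref{step:master} becomes infeasible after finitely many iterations. At that point every scenario in $\mathcal S$ has been evaluated at Step \ref{step:sub}, so $f^*$ equals the true optimum of the PNK, and the upper bound produced by the master can be taken as $-\infty$ (or any value $\leqslant f^*$), triggering the termination test $\bar f - f^* \leqslant \varepsilon f^*$ in the pseudocode.

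The main subtlety, and the step I expect to be the trickiest to state cleanly, is the termination condition itself: the pseudocode of Algorithm \ref{algo:pseudocode} stops when $\bar f - f^* \leqslant \varepsilon f^*$ rather than on master infeasibility, so I must argue that the master becoming infeasible is handled as an implicit termination (either by convention $\bar f = -\infty$ or by noting that once all scenarios have been enumerated, $f^*$ is globally optimal and therefore $\bar f \leqslant f^*$ for any valid master relaxation). Once that bookkeeping is in place, the argument reduces to the pigeonhole observation above, and the bound of $|\mathcal S|$ iterations follows immediately.
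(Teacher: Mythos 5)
Your proposal is correct and follows essentially the same route as the paper: the paper's proof likewise observes that the logical constraints \eqref{eq:logic} reduce the algorithm to a complete enumeration of the finitely many $N$-$k$ scenarios, which guarantees finite termination regardless of whether the cuts \eqref{eq:woodscut} are valid. Your version simply makes explicit the pigeonhole bookkeeping (each logical cut excludes exactly one incidence vector, so visited scenarios are pairwise distinct) that the paper leaves implicit.
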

\begin{proof}
 We start by observing that adding just the logical constraints in \eqref{eq:logic} (and neglecting the cuts in \eqref{eq:woodscut}) to $\mathcal F_{\operatorname{cp}}$ makes the cutting-plane algorithm a complete enumeration algorithm. This itself ensures finite convergence because there are a finite number of $N$-$k$ scenarios. \eqref{eq:woodscut}, if valid (for $\mathcal F_{\operatorname{nf}}$ or for $\mathcal F_{\operatorname{dc}}$ and $\mathcal F_{\operatorname{soc}}$ under the necessary conditions given by Theorem \ref{thm:dc_soc}), ensures that not all feasible solutions are enumerated and accelerates the convergence. If the cut is not valid, then using results in a suboptimal solution does not affect the finite convergence property, completing the proof.
 \end{proof}

\section{Computational results} \label{sec:results}
In this section, we demonstrate the computational effectiveness of the cutting-plane algorithm and heuristics for computing an optimal/feasible solution to the linear approximations and relaxation of the PNK. We use five test instances for all our simulations, classified into the following four categories: 
\begin{enumerate}[label=(C\arabic*)]
    \item Small test instances: IEEE 14-bus and IEEE single-area RTS96 (with 24 buses) test systems. 
    \item Medium test instances: IEEE three-area RTS96 (with 73 buses) and IEEE 118-bus test systems.
    \item Large test instances: WECC 240 test system. We note that the WECC 240 is an aggregated model of a real transmission system in the western United States.
    \item Very large test instances: PEGASE 1354-bus and the Polish 2383-bus Winter peak test systems.
\end{enumerate}
All computational experiments were run on an Intel Haswell 2.6 GHz, 62 GB, 20-core supercomputing machine at Los Alamos National Laboratory. The algorithms were implemented using PowerModels \cite{Coffrin2017}, an open-source framework for implementing power flow formulations. In the interest of using state-of-the-art commercial solvers, which support only convex quadratically constrained quadratic programs, a linear outer approximation of the outer problem objective function is used in place of its concave counterpart. CPLEX 12.7 is used for solving all the mixed-integer linear and convex optimization problems presented in this paper. The data on the probability of line failures are available as a part of reliability data for the IEEE single-area and three-area test cases \cite{Grigg1999}.  For the remaining three test cases, the probability of line failure for each line is generated using the following procedure. We compute the maximum and minimum probability ($p_{\max}$ and $p_{\min}$, respectively) among all the failure probabilities for the IEEE single-area RTS96 test case and generate the probability of line failures uniformly in the interval $[p_{\min}, p_{\max}]$ for the other test cases. This data and the code are made available in \cite{github}. 

In order to ensure that the solutions obtained using the cutting plane algorithm on $\mathcal F_{\operatorname{soc}}$, $\mathcal F_{\operatorname{dc}}$, and $\mathcal F_{\operatorname{nf}}$ are meaningful for the PNK, the nonlinear inner problem in $\mathcal F_{\operatorname{ac}}$ is solved using the $N$-$k$ scenario obtained from the cutting-plane algorithm using Ipopt \cite{ipopt} as the local solver. This process of obtaining a feasible solution to the PNK using its MINLP formulation, $\mathcal F_{\operatorname{ac}}$, is also called primal recovery or feasible solution recovery.

\subsection{Comparison of solution with total enumeration} \label{sec:enum}
In this section, we concentrate on only the small instances in category (C1) where an optimal solution to formulations $\mathcal F_{\operatorname{soc}}$, $\mathcal F_{\operatorname{dc}}$, and $\mathcal F_{\operatorname{nf}}$ can be obtained by complete enumeration of all the $N$-$k$ scenarios. 
Here, we compare the solutions obtained by the three formulations with their corresponding optimal solutions obtained via complete enumeration of all possible feasible solutions. We restrict our attention to values of $\bm k$ in $\{2, 3, 4\}$. 

\begin{table}[!h]
    \centering
    \begin{tabular}{cccc}
        \toprule
        \# buses & $\bm k$ & cutting-plane & complete enumeration \\
                 &         & obj. value (MW) & obj. value (MW)    \\
        \midrule
        14 & 2 & 23.99 & 23.99\\
        14 & 3 & 11.51 & 11.51\\
        14 & 4 & 7.47 & 7.47\\
        24 & 2 & 28.75 & 28.75\\
        24 & 3 & 15.52 & 15.52\\
        24 & 4 & 20.48 & 20.48\\
        \bottomrule 
    \end{tabular}
    \caption{Comparison of the objective values obtained by using the cutting-plane algorithm on $\mathcal F_{\operatorname{nf}}$ (NF approximation) and the optimal solution obtained via complete enumeration.}
    \label{tab:nf_enum}
\end{table}

\begin{table}[!h]
    \centering
    \begin{tabular}{cccc}
        \toprule
        \# buses & $\bm k$ & cutting-plane & complete enumeration \\
                 &         & obj. value (MW) & obj. value (MW)    \\
        \midrule
        14 & 2 & 23.99 & 23.99\\
        14 & 3 & 11.51 & 11.51\\
        14 & 4 & 7.47 & 7.47\\
        24 & 2 & 28.75 & 28.75\\
        24 & 3 & 15.52 & 15.52\\
        24 & 4 & 20.48 & 20.48\\
        \bottomrule 
    \end{tabular}
    \caption{Comparison of the objective values obtained by using the cutting-plane algorithm on $\mathcal F_{\operatorname{dc}}$ (DC approximation) and the optimal solution obtained via complete enumeration.}
    \label{tab:dc_enum}
\end{table}

\begin{table}[!h]
    \centering
    \begin{tabular}{cccc}
        \toprule
        \# buses & $\bm k$ & cutting-plane & complete enumeration \\
                 &         & obj. value (MW) & obj. value (MW)    \\
        \midrule
        14 & 2 & 24.02 & 24.02\\
        14 & 3 & 11.53 & 11.53\\
        14 & 4 & 7.47 & 7.47\\
        24 & 2 & 28.75 & 28.75\\
        24 & 3 & 19.18 & 19.18\\
        24 & 4 & 20.97 & 20.97\\
        \bottomrule 
    \end{tabular}
    \caption{Comparison of the objective values obtained by using the cutting-plane algorithm on $\mathcal F_{\operatorname{soc}}$ (SOC relaxation) and the optimal solution obtained via complete enumeration.}
    \label{tab:soc_enum}
\end{table}

It can be observed from Tables \ref{tab:dc_enum} and \ref{tab:soc_enum} that the cutting-plane algorithm is able to compute an optimal solution to formulations $\mathcal F_{\operatorname{dc}}$ and $\mathcal F_{\operatorname{soc}}$, despite being a heuristic, for the instances in category (C1) for values of $\bm k \in \{2,3,4\}$. The results in Table \ref{tab:nf_enum} are not surprising because the cutting-plane algorithm is indeed an exact algorithm for the PNK with the NF approximation. Figs. \ref{fig:iterations_cp_14} and \ref{fig:iterations_cp_24} present the number of iterations of the cutting-plane algorithm for the SOC relaxation, the DC approximation, and the NF approximation in the IEEE 14-bus system and the IEEE single-area RTS96 system with 24 buses, respectively. A consistent trend that emerges from Figs. \ref{fig:iterations_cp_14} and \ref{fig:iterations_cp_24} is that the number of iterations of the cutting-plane algorithm is highest for the NF approximation and lowest for the SOC relaxation of the PNK, except for in the IEEE 14-bus system when $\bm k = 2$. The computation time is not reported for these small instances because the cutting-plane algorithm converged to an optimal solution within $5$ seconds for all the runs.

\begin{figure}[!h]
	\centering
	\begin{tikzpicture}[scale=0.8]
\begin{axis}[
    ybar,
    enlarge x limits=0.25,
    legend style={at={(0.7,0.85)},draw=none,fill=none,
      anchor=north,legend columns=-1},
    ylabel={\# of iterations},
    symbolic x coords={$\bm k=2$, $\bm k=3$,$\bm k=4$,$\bm k=5$},
    xtick=data,
    ymajorgrids,
    ]
\addplot
	coordinates {($\bm k=2$, 7) ($\bm k=3$, 10) ($\bm k=4$, 4) ($\bm k=5$, 5)}; 
\addplot[fill=red!30, postaction={pattern=horizontal lines}]
    coordinates {($\bm k=2$, 4) ($\bm k=3$, 6) ($\bm k=4$, 3) ($\bm k=5$, 3)}; 
\addplot[fill=brown!30, postaction={pattern=north east lines}]
    coordinates {($\bm k=2$, 5) ($\bm k=3$, 5) ($\bm k=4$, 2) ($\bm k=5$, 2)}; 
\legend{$\mathcal F_{\operatorname{nf}}$, $\mathcal F_{\operatorname{dc}}$, $\mathcal F_{\operatorname{soc}}$}
\end{axis}
\end{tikzpicture}
\caption{Number of iterations of the cutting-plane algorithm applied to formulations $\mathcal F_{\operatorname{soc}}$, $\mathcal F_{\operatorname{dc}}$, and $\mathcal F_{\operatorname{nf}}$ in the IEEE 14-bus system.}
\label{fig:iterations_cp_14}
\end{figure}
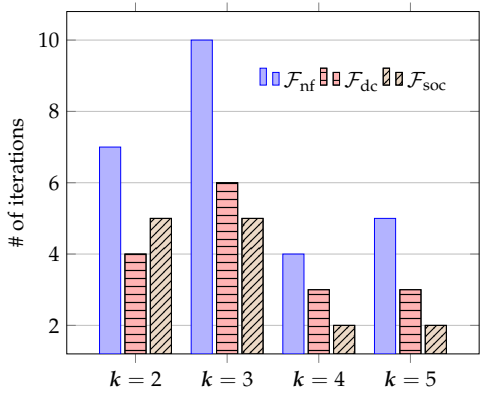

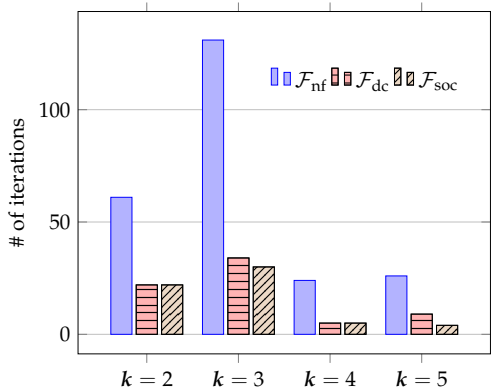
\begin{figure}
	\centering
	\begin{tikzpicture}[scale=0.8]
\begin{axis}[
    ybar,
    enlarge x limits=0.25,
    legend style={at={(0.7,0.85)},draw=none,fill=none,
      anchor=north,legend columns=-1},
    ylabel={\# of iterations},
    symbolic x coords={$\bm k=2$, $\bm k=3$,$\bm k=4$,$\bm k=5$},
    xtick=data,
    ymajorgrids,
    ]
\addplot
	coordinates {($\bm k=2$, 61) ($\bm k=3$, 131) ($\bm k=4$, 24) ($\bm k=5$, 26)}; 
\addplot[fill=red!30, postaction={pattern=horizontal lines}]
    coordinates {($\bm k=2$, 22) ($\bm k=3$, 34) ($\bm k=4$, 5) ($\bm k=5$, 9)}; 
\addplot[fill=brown!30, postaction={pattern=north east lines}]
    coordinates {($\bm k=2$, 22) ($\bm k=3$, 30) ($\bm k=4$, 5) ($\bm k=5$, 4)}; 
\legend{$\mathcal F_{\operatorname{nf}}$, $\mathcal F_{\operatorname{dc}}$, $\mathcal F_{\operatorname{soc}}$}
\end{axis}
\end{tikzpicture}
\caption{Number of iterations of the cutting-plane algorithm applied to formulations $\mathcal F_{\operatorname{soc}}$, $\mathcal F_{\operatorname{dc}}$, and $\mathcal F_{\operatorname{nf}}$ in the IEEE single-area RTS96 system with 24 buses.}
\label{fig:iterations_cp_24}
\end{figure}

\subsection{Performance of the cutting-plane algorithm on categories (C2) and (C3)} \label{subsec:c2_c3}
In this section, we present the computational results for the test cases in categories (C2) and (C3). Unlike the instances in category (C1), computing the optimal solution using complete enumeration is time consuming for the instances in categories (C2) and (C3), so we present the results of the cutting-plane algorithm applied to formulations $\mathcal F_{\operatorname{soc}}$, $\mathcal F_{\operatorname{dc}}$, and $\mathcal F_{\operatorname{nf}}$ for instances in category (C1) only. A tolerance of $\varepsilon = 1\%$ is set for every run of the cutting-plane algorithm. A computation time limit of $24$ hours was imposed on all the runs of the algorithm. Table \ref{tab:c2} summarizes the computational behavior of the cutting-plane algorithm and the heuristic for the instances in category (C2) and (C3), respectively. The column headings used in the table are defined as follows:

\noindent \textbf{\# buses}: number of buses (nodes) in the system.\\
\noindent $\bm k$: number of interdicted lines (edges).\\
\noindent $\mathcal F_i$: cutting-plane algorithm applied to formulation $\mathcal F_i$, $i\in \{\operatorname{nf}, \operatorname{dc}, \operatorname{soc}\}$.\\
\noindent \textbf{time}: time in seconds for the cutting-plane algorithm to converge to $\varepsilon$-optimal solution; if the algorithm timed out at $24$ hours, then ``TO'' (timed-out) status is reported. \\
\noindent \textbf{iter}: number of iterations of the cutting-plane algorithm.\\
\noindent \textbf{sol}: $\varepsilon$-optimal objective value (in MW) if the cutting-plane algorithm  terminated within 24 hours or best feasible solution obtained at the end of 24 hours (if algorithm timed out, the gap in \% is reported instead of the objective value; the gaps are enclosed in parentheses to differentiate them from the $\varepsilon$-optimal solutions). We note that this column reports the cost of the heuristic solution, obtained using the cutting-plane algorithm, for formulations $\mathcal F_{\operatorname{soc}}$ and $\mathcal F_{\operatorname{dc}}$.\\
\noindent \textbf{ac}: the objective value of the feasible solution to the PNK obtained via primal recovery. This objective value is not reported for instances that timed out. \\

We observe from Table \ref{tab:c2} that the SOC relaxation of the PNK given by formulation $\mathcal F_{\operatorname{soc}}$ performs the best from a computational perspective. The cutting-plane algorithm consistently takes the least number of iterations when applied to the SOC relaxation. The cutting-plane algorithm performs the worst when applied to the NF approximation of the PNK. This can be attributed to the fact that the NF approximation has a larger feasible region of the PNK, whereas the SOC relaxation is a smaller representation of the feasible set of the PNK. Despite the fact that the cut in \eqref{eq:woodscut} is not necessarily valid for the SOC relaxation and the DC approximation, the primal recovery procedure to obtain a feasible solution to the MINLP formulation for the PNK is observed to provide similar objective values to both these formulations. Hence, it is hard to distinguish between or identify any trends in the actual solutions produced by the cutting-plane algorithm when applied to formulations $\mathcal F_{\operatorname{dc}}$ and $\mathcal F_{\operatorname{soc}}$. Fig. \ref{fig:hamming} shows the Hamming distance between the solutions obtained by the cutting-plane algorithm on the three formulations for the WECC 240 test system; it can be observed from the figure that the Hamming distance between the interdiction plans obtained using formulations $\mathcal F_{\operatorname{soc}}$ and $\mathcal F_{\operatorname{dc}}$ is zero for $\bm k \in \{2,3,4,5\}$. 

There are outliers to the fact that the cutting-plane algorithm on $\mathcal F_{\operatorname{dc}}$ and $\mathcal F_{\operatorname{soc}}$ results in the same choice of interdiction plans; for instance consider the instance IEEE three-area RTS96 with 73 buses: when $\bm k = 3$, the Hamming distance between the interdiction plans obtained from the two formulations is 2 and the difference between the objective values obtained using the primal recovery procedure is $15\%$. In fact, there are many other problems in power systems \cite{Coffrin2014, Nagarajan2017} where the SOC relaxation and the DC approximation of the power flow constraints produce contrasting results. We delegate to future work the need for further study to determine if the sameness of the solutions between the two formulations holds more generally for the PNK. 

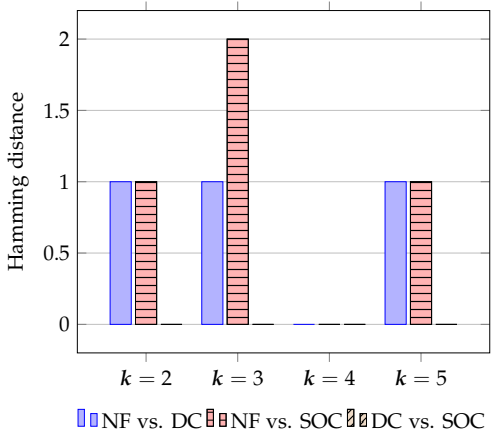
\begin{figure}
\centering
    \begin{tikzpicture}[scale=0.8]
\begin{axis}[
    ybar,
    enlarge x limits=0.25,
    legend style={at={(0.5,-0.15)},draw=none,fill=none,
      anchor=north,legend columns=-1},
    ylabel={Hamming distance},
    symbolic x coords={$\bm k=2$, $\bm k=3$,$\bm k=4$,$\bm k=5$},
    xtick=data,
    ymajorgrids,
    ]
\addplot coordinates {($\bm k=2$,1) ($\bm k=3$,1) ($\bm k=4$,0) ($\bm k=5$,1)};
\addplot[fill=red!30, postaction={pattern=horizontal lines}] coordinates {($\bm k=2$,1) ($\bm k=3$,2) ($\bm k=4$,0) ($\bm k=5$,1)};
\addplot[fill=brown!30, postaction={pattern=north east lines}] coordinates {($\bm k=2$,0) ($\bm k=3$,0) ($\bm k=4$,0) ($\bm k=5$,0)};
\legend{NF vs. DC, NF vs. SOC, DC vs. SOC}
\end{axis}
\end{tikzpicture}
\caption{Hamming distance between the interdiction plans obtained by the cutting-plane algorithm on the three formulations for the WECC 240 test system.}
\label{fig:hamming}
\end{figure}

However, taking into account the computation time reveals the SOC relaxation in formulation $\mathcal F_{\operatorname{soc}}$ to be the clear winner between the three formulations, from both a solution quality and computation time standpoint, for the PNK. A likely reason for the relative computational efficiency of the cutting-plane algorithm applied on $\mathcal F_{\operatorname{soc}}$ is that the nonlinear line losses captured by the SOC relaxation has the property of removing symmetric solutions. In contrast, the DC approximation can potentially allow solutions ($N$-$k$ failures) which produce the same amount of load shed. The number of iterations of the cutting-plane algorithm on $\mathcal F_{\operatorname{soc}}$ and $\mathcal F_{\operatorname{dc}}$ is also consistent with this hypothesis. This type of behaviour was also observed in the literature on similar problems such as Optimal Transmission Switching (see \cite{Hijazi2017,Coffrin2014}). However, it is uncommon. Hence, we will use the SOC relaxation of the PNK for the remaining computational experiments.

\begin{table*}
    \centering
    \small
    \begin{tabular}{ccrrrrrrrrrrrr}
         \toprule
         \# buses & $\bm k$ & \multicolumn{4}{c}{$\mathcal F_{\operatorname{nf}}$}  & \multicolumn{4}{c}{$\mathcal F_{\operatorname{dc}}$}  & \multicolumn{4}{c}{$\mathcal F_{\operatorname{soc}}$}  \\
         \cmidrule(lr){3-6}\cmidrule(lr){7-10}\cmidrule(lr){11-14}
         & & time & iter & sol & ac & time & iter & sol & ac & time & iter & sol & ac \\
         \midrule
         73 & 2 & 235.06 & 242 & 28.75 & 28.75 & 31.09 & 107 & 28.75 & 28.75 & 29.61 & 117 & 28.75 & 28.75 \\
         73 & 3 & 45804.10 & 1256 & 15.15 & 15.15 & 580.29 & 382 & 15.52 & 15.52 & 66.27 & 144 & 19.19 & 19.18 \\
         73 & 4 & 1715.02 & 336 & 20.48 & 20.96 & 16.66 & 47 & 20.48 & 20.96 & 6.30 & 22 & 20.96 & 20.96 \\
         73 & 5 & 18438.44 & 641 & 11.06 & 11.32 & 61.12 & 99 & 11.06 & 11.32 & 9.69 & 35 & 11.32 & 11.32 \\
         73 & 6 & TO & 911 & (26.97) & ----- & 118.49 & 127 & 5.97 & 6.11 & 19.18 & 60 & 6.11 & 6.11\\
         73 & 7 & TO & 794 & (31.19) & ----- & 335.76 & 167 & 3.22 & 3.30 & 29.87 & 77 & 3.30 & 3.30 \\
         73 & 8 & TO & 716 & (26.17) & ----- & 511.23 & 202 & 1.71 & 1.73 & 40.43 & 86 & 1.74 & 1.74 \\
         73 & 9 & TO & 687 & (23.68) & ----- & 640.90 & 194 & 0.91 & 0.91 & 47.84 & 93 & 0.91 & 0.91 \\
         73 & 10 & TO & 673 & (22.23) & ----- & 435.06 & 217 & 0.47 & 0.47 & 57.70 & 111 & 0.47 & 0.47 \\
        \addlinespace
         118 & 2 & 208.81 & 260 & 21.22 & 21.11 & 9.90 & 33 & 21.12 & 21.11 & 8.72 & 28 & 21.12 & 21.11 \\
         118 & 3 & 13565.87 & 1199 & 11.40 & 11.40 & 24.83 & 83  & 11.40 & 11.40 & 26.87 & 75  & 11.40 & 11.40 \\
         118 & 4 & TO & 1809 & (16.99) & ----- & 61.30 & 124 & 6.83 & 6.83 & 43.81  & 101 & 6.92 & 8.52 \\
         118 & 5 & TO & 1497 & (16.61) & ----- & 348.21 & 245 & 3.67 & 4.57 & 115.36 & 167 & 3.73 & 4.60 \\
         118 & 6 & TO & 1293 & (26.97) & ----- & 1205.63 & 348 & 2.02 & 2.02 & 319.29 & 252 & 2.02 & 2.02 \\
         118 & 7 & TO & 1095 & (31.19) & ----- & 3161.65 & 458 & 1.09 & 1.09 & 508.70 & 297 & 1.09 & 1.09 \\
         118 & 8 & TO & 951 & (26.17) & ----- & 8125.70 & 588 & 0.58 & 0.58 & 1543.57 & 447 & 0.58 & 0.58 \\
         118 & 9 & TO & 866 & (23.68) & ----- & 15615.67 & 677 & 0.30 & 0.30 & 1777.69 & 444 & 0.31 & 0.36 \\
         118 & 10 & TO & 840 & (22.23) & ----- & 20622.50 & 708 & 0.16 & 0.16 & 2867.04 & 507 & 0.16 & 0.19 \\
         \addlinespace
         240	&	2	&	9.60	&	21	&	2227.27	&	2287.13	&	1.66	&	6	&	2331.26	&	2482.68	&	7.35	&	8	&	2393.18	&	2482.68	\\
240	&	3	&	14.13	&	41	&	1329.31	&	1418.38	&	2.73	&	13	&	1381.54	&	1471.39	&	7.70	&	8	&	1422.93	&	1471.39	\\
240	&	4	&	15.71	&	46	&	778.54	&	831.09	&	2.87	&	13	&	796.09	&	831.09	&	10.17	&	12	&	808.91	&	831.09	\\
240	&	5	&	39.54	&	88	&	430.66	&	451.17	&	3.09	&	15	&	437.94	&	458.66	&	11.04	&	13	&	447.24	&	458.66	\\
240	&	6	&	127.70	&	169	&	235.24	&	242.83	&	4.65	&	22	&	236.49	&	247.68	&	18.16	&	23	&	241.77	&	248.79	\\
240	&	7	&	526.72	&	312	&	127.03	&	131.13	&	4.88	&	23	&	129.09	&	134.86	&	19.83	&	25	&	130.55	&	134.35	\\
240	&	8	&	1293.63	&	409	&	69.04	&	71.04	&	8.94	&	36	&	70.05	&	72.19	&	20.50	&	27	&	70.82	&	72.83	\\
240	&	9	&	2362.78	&	473	&	37.28	&	38.36	&	11.08	&	44	&	37.88	&	38.98	&	20.46	&	27	&	38.32	&	39.39	\\
240	&	10	&	3939.01	&	523	&	20.09	&	20.70	&	8.02	&	43	&	20.30	&	21.17	&	17.62	&	23	&	20.65	&	21.24	\\
         \bottomrule
    \end{tabular}
    \caption{Computational results for instances in categories (C2) and (C3).}
    \label{tab:c2}
\end{table*}

\subsection{Effect of the failure probability distributions} \label{subsec:prob}
For all the computational experiments thus far we either used the line-failure probabilities from publicly available data (reliability data) or generated data from a uniform distribution in the range defined by the reliability data. Now, we study the effect of a change in distribution of line-failure probabilities on the solution obtained by the cutting-plane algorithm. For the purposes of this study, we restrict our attention to the SOC relaxation of the PNK given by the formulation $\mathcal F_{\operatorname{soc}}$ and the instance WECC 240 from category (C3). 

For ease of exposition, we will denote the line-failure probability of line $(i,j) \in \mathcal E$ generated using the reliability data by $\bm{Pr}_{ij}^r$ \cite{Grigg1999}. We perform two sets of computational experiments, referred to as (D1) and (D2). The first set of experiments is aimed at examining how an increase in line-failure probability in a certain geographical area affects the objective value obtained using the cutting-plane algorithm. This experiment is intended to simulate a severe event that could potentially increase the line-failure probabilities in a geographical region. 
\begin{figure}[h!]
\centering
\includegraphics[scale=0.2,cfbox=black 1pt 1pt]{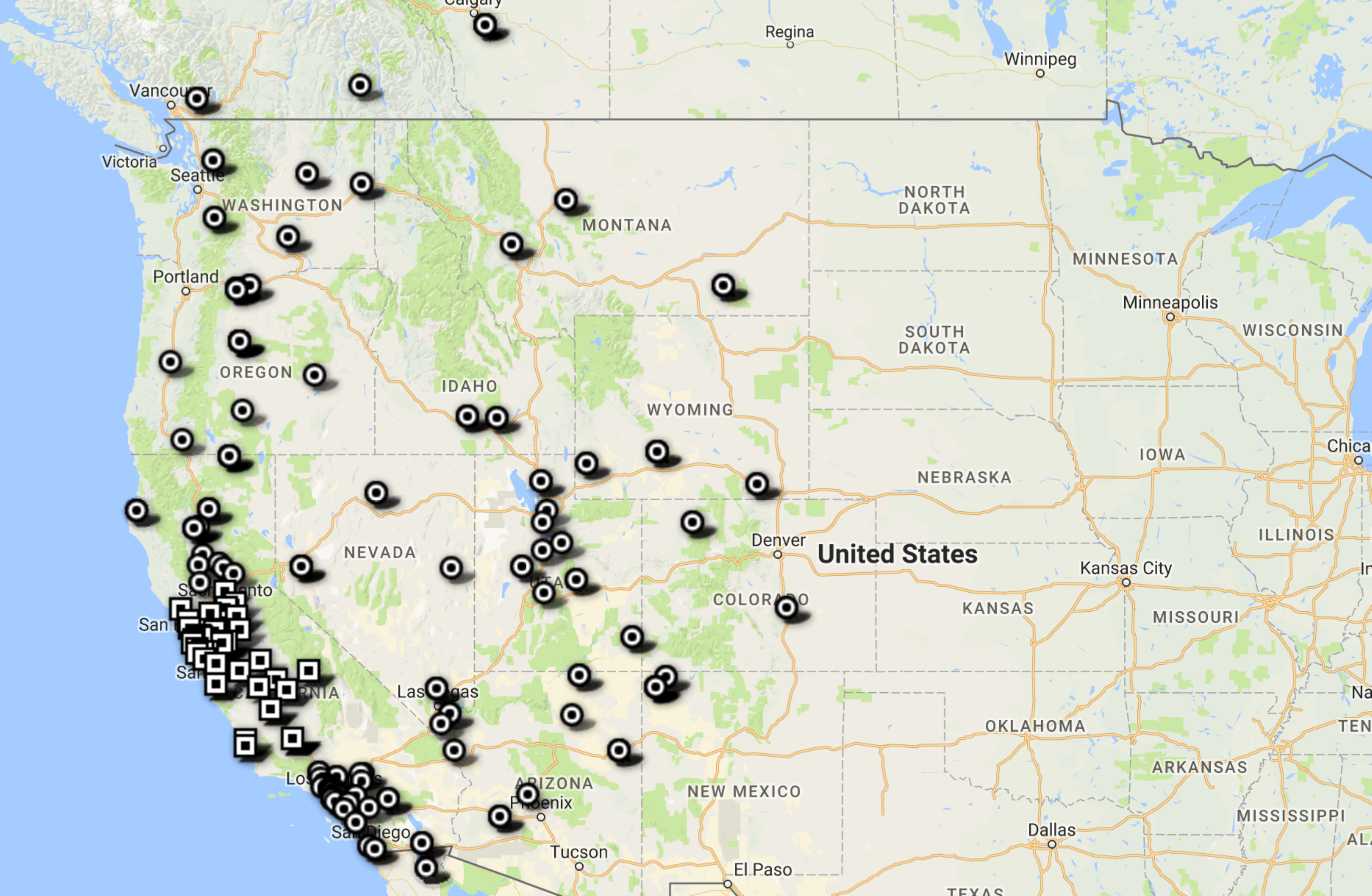}
\caption{Buses that are assumed to be affected by a severe event on the coast of California are represented using squares and the other buses in the WECC 240 test system are represented using circles.}
\label{fig:hurricane}
\end{figure}
To that end, we first choose a set of $117$ transmission lines, $\mathcal L \subset \mathcal E$, in the state of California using the geolocated WECC 240 test instance. The set of buses on which these lines are incident is shown in Fig. \ref{fig:hurricane}; this subset of buses is assumed to be affected by a severe event on the coast of California. The line-failure probability for each line in $\mathcal E$ is generated using the following equation: 
\begin{flalign}
\bm{Pr}_{ij} = \begin{cases} \bm{Pr}_{ij}^r + n \cdot \frac{1-\bm{Pr}_{ij}^r}{5} & \mbox{if } (i,j) \in \mathcal L, \\ \bm{Pr}_{ij}^r & \mbox{otherwise}. \end{cases} \label{eq:d1}
\end{flalign}
We use $n=\{1,2,3\}$ to generate three sets of line-failure probabilities. We also note that the higher the value of $n$, the higher the probability of line failure for the lines in set $\mathcal L$. Hence, $n=3$ can correspond to a more severe event than $n \in \{1,2\}$. Also, when $n=0$, the line-failure probabilities reduce to the probabilities that were obtained using the reliability data, studied in the previous section.

\begin{figure}[h!]
\centering
\begin{tikzpicture}[scale=0.8]
    \begin{axis}[xlabel=$\bm k$, ylabel=\text{obj. value (MW)}, legend style={draw=none},grid=major,]
    \addplot[only marks,color=brown,mark=*,mark size=2] plot coordinates {
        (2,2393.18) (3,1422.93) (4,808.91) (5,447.24) (6,241.77) (7,130.55) (8,70.82) (9,38.32) (10,20.65) (11,11.17) (12,5.98) (13,3.23) (14,1.71) (15,0.91)
    };
    \addplot[only marks,color=red,mark=diamond*,mark size=2] plot coordinates {
        (2,2551.52) (3,1592.15) (4,1013.70) (5,632.55) (6,394.71) (7,246.30) (8,151.72) (9,93.46) (10,56.82) (11,34.55) (12,20.44) (13,12.27) (14,7.19) (15,4.30)
    };
\addplot[only marks,color=blue,mark=square*,mark size=2] plot coordinates {
        (2,3429.83) (3,2569.06) (4,1844.77) (5,1340.52) (6,962.59) (7,685.30) (8,496.24) (9,353.32) (10,249.44) (11,177.60) (12,125.39) (13,87.02) (14,60.39) (15,41.62)
    };
\addplot[only marks,color=black,mark=triangle*,mark size=2] plot coordinates {
        (2,4411.60) (3,3790.12) (4,3107.48) (5,2565.72) (6,2267.55) (7,1931.06) (8,1599.71) (9,1292.57) (10,1055.88) (11,857.43) (12,696.81) (13,562.48) (14,457.17) (15,369.21)
    };
    \legend{$n=0$,$n=1$,$n=2$,$n=3$}
    \end{axis}
\end{tikzpicture}
\caption{Plot of the objective value vs. $\bm k$ using the probability data using Eq. \eqref{eq:d1}. $n=0$ shows the plot for the case when line-failure probabilities are obtained using the reliability data, $\bm{Pr}_{ij}^r$.}
\label{fig:case1}
\end{figure}
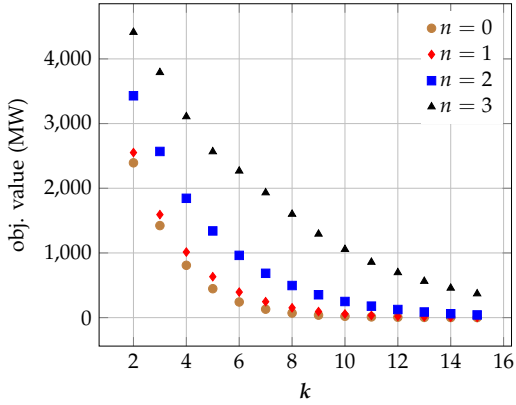

Fig. \ref{fig:case1} shows the objective value obtained by the cutting-plane algorithm for formulation $\mathcal F_{\operatorname{soc}}$ for the WECC 240 test system for $\bm k \in \{2, \dots,15\}$ using the line-failure probabilities generated for the computational experiments in set (D1). It is clear from the figure that the objective value increases with increasing line-failure probabilities; this is expected given the objective function of the PNK. 

The second set of experiments, (D2), is aimed at examining the effect of the line-failure probability distribution on the solution obtained by the cutting-plane algorithm. To make the comparison fair, we assume a probability budget, which is defined as the sum of the line-failure probabilities of all the lines using the probabilities generated using the reliability data, $\bm{Pr}_{ij}^r$. The probability budget is defined as $\mathcal B := \sum_{(i,j) \in \mathcal E} \bm{Pr}_{ij}^r$. We will refer to the line-failure probabilities $\bm{Pr}_{ij}^r$ as ``$\operatorname{rel}$'', where $\operatorname{rel}$ is an abbreviation of reliability. We consider three cases for the line-failure probabilities: (1) we assume each $\bm{Pr}_{ij}$ is exactly $0.5$ (this is equivalent to the deterministic variant of the PNK and we will refer to this case as ``$\operatorname{det}$''), (2) we assume $\bm{Pr}_{ij} \sim U(0,1)$ (a uniform distribution in the interval $(0,1)$), and (3) we assume $\bm{Pr}_{ij} \sim \operatorname{texp}(1)$, where ``$\operatorname{texp}(1)$'' denotes a truncated exponential distribution in $(0,1)$ with rate $1$. Once we generate the line-failure probabilities for all three cases, these values are normalized to sum to the probability budget $\mathcal B$.

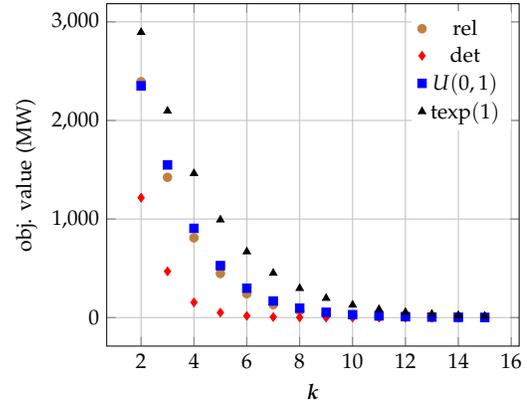
\begin{figure}[h!]
\centering
\begin{tikzpicture}[scale=0.8]
    \begin{axis}[
        xlabel=$\bm k$,
        ylabel=\text{obj. value (MW)},
        grid=major,
        legend style={draw=none},
    ]
    \addplot[only marks,color=brown,mark=*,mark size=2] plot coordinates {
        (2,2393.18) (3,1422.93) (4,808.91) (5,447.24) (6,241.77) (7,130.55) (8,70.82) (9,38.32) (10,20.65)
        (11,11.17) (12,5.98) (13,3.23) (14,1.71) (15,0.91)
    };
    \addplot[only marks,color=red,mark=diamond*,mark size=2] plot coordinates {
        (2,1216.08) (3,468.76) (4,153.88) (5,50.46) (6,16.95) (7,5.39) (8,1.69) (9,0.52) (10,0.16) (11,0.05) (12,0.01) (13,0.00) (14,0.00) (15,0.00)    
    };
    \addplot[only marks,color=blue,mark=square*,mark size=2] plot coordinates {
        (2,2351.22) (3,1548.61) (4,905.36) (5,526.82) (6,298.18) (7,168.18) (8,95.05) (9,53.07) (10,29.51) (11,16.49) (12,9.15) (13,5.05) (14,2.80) (15,1.54)    
    };
    \addplot[only marks,color=black,mark=triangle*,mark size=2] plot coordinates {
        (2,2894.73) (3,2096.48) (4,1462.96) (5,991.89) (6,668.55) (7,451.03) (8,296.35) (9,196.20) (10,127.92) (11,83.15) (12,54.18) (13,35.22) (14,22.79) (15,14.66)    
    };
    \legend{$\operatorname{rel}$\\$\operatorname{det}$\\$U(0,1)$\\$\operatorname{texp}(1)$\\}
\end{axis}
\end{tikzpicture}
\caption{Plot of the objective value vs. $\bm k$ for different distributions of line-failure probability.}
\label{fig:case2}
\end{figure}

Fig. \ref{fig:case2} shows the objective value obtained by the cutting-plane algorithm decreasing for increasing values of $\bm k$; this is because the larger the value of $\bm k$, the lower the probability of an $N$-$k$ failure (this observation is also valid for Fig. \ref{fig:case1}). The sum of the individual line-failure probabilities is normalized to a constant value for all four cases; this can be interpreted as moving the line-failure probabilities from one line to another. This change does not seem to affect the value of $\bm k$ at which the objective becomes comparatively small. This leads to a hypothesis that for a given system and for a constant value of probability budget $\mathcal B$, there exists a value of $\bm k$ greater than which the likelihood of the system sustaining damage is low. Also, the objective value for a fixed $\mathcal B$ is observed to be greatest when $\bm{Pr}_{ij} \sim \operatorname{texp}(1)$ and least for the ``$\operatorname{det}$'' case, when all the probabilities take a value of $0.5$ before normalization. The fact that this objective value is lowest in the ``det'' case and increases significantly in the ``exp'' case suggests that the interdiction plan obtained by solving the PNK is sensitive to  the nature of the failure distribution. This observation is also corroborated by Fig. \ref{fig:regions}.
\begin{figure}[h!]
\centering
\includegraphics[scale=0.4, cfbox=black 1pt 1pt]{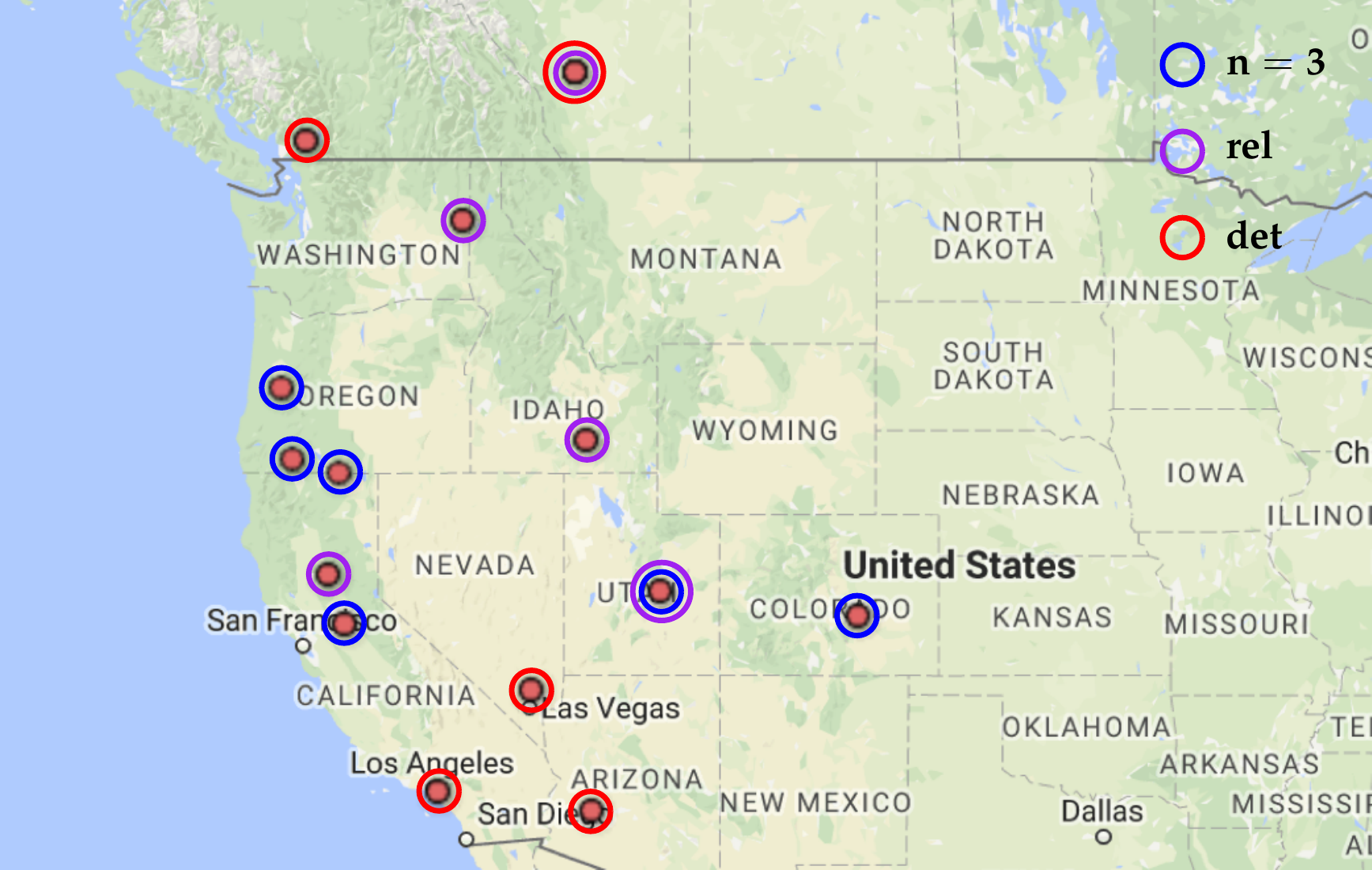}
\caption{Regions that are affected for the different experiments on the WECC 240 instance.}
\label{fig:regions}
\end{figure}
Fig. \ref{fig:regions} shows the regions in which the lines are interdicted for the case when a severe event occurs on the coast of California, with line-failure probabilities generated using Eq. \eqref{eq:d1} with the value of $n$ set to $3$, the ``det'' case, and the ``rel'' case. 

\subsection{Scalability of the cutting-plane algorithm} \label{subsec:scalability}
In this section, we demonstrate the scalability of the cutting-plane algorithm on the very large test instances from category (C4). Given the computational and solution quality superiority of $\mathcal F_{\operatorname{soc}}$ on instances in categories (C2) and (C3),
for the purposes of this study, we restrict our attention to $\mathcal F_{\operatorname{soc}}$. Furthermore, we increase the tolerance from $\varepsilon = 1\%$ to $\varepsilon = 5\%$ and set a computation time limit of $24$ hours for every run of the cutting-plane algorithm. An $\varepsilon$ value of $5\%$ was chosen for this set of experiments as 
this value of $\varepsilon$ provided a reasonable trade-off between computation time and the solution quality. Table \ref{tab:c4} summarizes the computational behavior of the cutting-plane algorithm applied to the formulation $\mathcal F_{\operatorname{soc}}$ for the ``PEGASE'' and ``Polish'' test instances. The additional column headings, used in the Table \ref{tab:c4} are defined as follows:

\noindent \textbf{feas}: objective value (MW) of the heuristic solution provided by the cutting-plane algorithm for $\varepsilon=5\%$. \\
\noindent \textbf{gap}: relative gap in $\%$ \emph{i.e.}, $\frac{(\bar f - f^*)}{f^*}\cdot 100 \%$ after the termination of the cutting-plane algorithm.

The results in Table \ref{tab:c4} indicate that the cutting-plane algorithm can scale to very large test instances of the order of $2000$ buses. Even for the Polish instances that timed out in the $\bm k = 6,\dots,10$ cases, the relative gap is below $6\%$ further corroborating the effectiveness of the algorithm.

\begin{table*}
    \centering
    \small
    \begin{tabular}{crrrrrrrrrr}
         \toprule
         $\bm k$ & \multicolumn{5}{c}{PEGASE 1354-bus test system}  & \multicolumn{5}{c}{Polish 2383-bus test system}  \\
         \cmidrule(lr){2-6}\cmidrule(lr){7-11}
          & time & iter & feas & gap & ac & time & iter & feas & gap & ac \\
         \midrule
        2 & 133.25 & 57 & 255.22 & 0.12 & 255.23 & 1046.78 & 64 & 100.72 & 0.83 & 109.77 \\
    3 & 159.27 & 66 & 172.07 & 0.53 & 173.74 & 2937.39 & 164 & 58.88 & 1.31 & 63.02 \\
    4 & 398.36 & 125 & 102.94 & 2.10 & 103.82 & 7520.70 & 335 & 34.75 & 0.64 & 36.85 \\
    5 & 954.42 & 215 & 60.27 & 0.02 & 60.73 & 17126.15 & 557 & 18.76 & 5.00 & 19.90 \\
    6 & 2861.25 & 352 & 32.55 & 5.00 & 32.79 & TO & 1106 & 9.95 & 5.98 & 10.74 \\
    7 & 5469.59 & 425 & 18.19 & 5.00 & 19.46 & TO & 1015 & 5.87 & 5.02 & 7.23 \\
    8 & 28372.89 & 721 & 10.32 & 2.59 & 10.57 & TO & 913 & 3.10 & 6.75 & 3.82 \\
    9 & 28401.17 & 624 & 5.43 & 4.96 & 5.75 & TO & 832 & 1.70 & 6.77 & 2.08 \\
    10 & 29980.16 & 545 & 3.09 & 4.06 & 3.44 & TO & 750 & 0.89 & 7.39 & 1.10 \\

         \bottomrule
    \end{tabular}
    \caption{Computational results for instances in category (C4).}
    \label{tab:c4}
\end{table*}

\section{Conclusion} \label{sec:conclusion}
In this paper, we have presented an MINLP formulation, a convex (SOC) relaxation, and linear approximations to the probabilistic interdiction problem in power transmission networks. To the best of our knowledge, this is the first study of the $N$-$k$ problem on power transmission systems that incorporates the convex relaxation of the AC power flow constraints. In summary, we have shown the following: (1) the computational effectiveness of the generic cutting-plane algorithm based on work in the literature for the deterministic variant of the PNK with the DC approximation \cite{Salmeron2009} extends to the probabilistic generalization and also to convex relaxations of the AC power flow equations; (2) the nonlinear convex relaxation (SOC) of the AC power flow equations accelerates the convergence of the cutting-plane algorithm in comparison to linear DC/NF approximations; (3) the interdiction plan obtained using the cutting-plane algorithm is sensitive to changes in the component-failure probabilities, and this sensitivity can be used to compute interdiction plans in smaller subnetworks of the original system by spatially biasing the probabilities in the regions of interest; and (4) the PNK is very effective at identifying high-risk failure plans in the case of natural or manmade severe events. Future work should focus on modeling the dependence on component-failure probabilities, dependence on severe event probabilities, as well an in depth comparison of solutions obtained using a deterministic $N$-$k$ model. Further work should also focus on stochastic variants of the $N$-$k$ interdiction problem on power systems and comparing the solutions with the PNK. Finally, including models of cascading failures in the context on the $N$-$k$ problems remains a computational challenge and is also an important avenue of future work.

\section*{Acknowledgements} 
This work was supported by the U.S. Department of Energy through the LANL/LDRD program and Center for Nonlinear Studies.

\bibliographystyle{unsrt}
\bibliography{Nkref.bib}

\end{document}